\title{A case study in almost-perfect security for unconditionally secure communication}
\author{Esteban Landerreche\footnote{\tt estebanlan@gmail.com}\\\\
David Fern\'andez-Duque\footnote{\tt david.fernandez@itam.mx}\\\\
Instituto Tecnol\'ogico Aut\'onomo de M\'exico\\
R\'io Hondo 1, 01080 Mexico City, Mexico
}
\begin{document}

\newtheorem{teorema}{Theorem}
\newtheorem{lema}{Lemma}
\newtheorem{corolario}{Corollary}
\newtheorem{define}{Definition}
\newtheorem{ejem}{Example}

\newcommand\ParEqui{m}
\newcommand\ParCath{\gamma}
\newcommand\CartaAlicia{\spadesuit}
\newcommand\CartaBob{\diamondsuit}
\newcommand\CartaCath{\clubsuit}
\newcommand\CartaDesc{\circ}
\newcommand\kapa[3]{[\dot{#3}]_{#1,#2}}
\newcommand\kkapa[3]{[\ddot{#3}]_{#1,#2}}
\newcommand\david[1]{{\color{red}DAVID: #1}}
\newcommand\esteban[1]{{\color{blue}ESTEBAN: #1}}




\maketitle
\begin{abstract}
In the Russian cards problem, Alice, Bob and Cath draw $a$, $b$ and $c$ cards, respectively, from a publicly known deck. Alice and Bob must then communicate their cards to each other without Cath learning who holds a single card. Solutions in the literature provide {\em weak security,} where Alice and Bob's exchanges do not allow Cath to know with certainty who holds each card that is not hers, or {\em perfect security,} where Cath learns no probabilistic information about who holds any given card. We propose an intermediate notion, which we call {$\varepsilon$-strong security,} where the probabilities perceived by Cath may only change by a factor of $\varepsilon$. We then show that a mild variant of the so-called {\em geometric strategy} gives $\varepsilon$-strong safety for arbitrarily small $\varepsilon$ and appropriately chosen values of $a,b,c$.
\end{abstract}

\section{Introduction}

Consider the following problem, which appeared in the 2000 Moscow Mathematics Olympiad:
\begin{quote}
Alice, Bob and Cath draw three, three and one cards, respectively, from a publicly known deck of seven. Alice and Bob wish to inform each other of the cards they hold, but they may only do so by public, unencrypted announcements. Moreover, they do not wish for Cath to know who holds a single card that is not hers. Can Alice and Bob achieve this?
\end{quote}
It later came to be known as the {\em Russian cards problem} \cite{hvd.studlog:2003}, and is interesting from a cryptographical viewpoint since it provides a framework for {\em unconditionally secure communication,} perhaps the strongest notion of security that one may demand from a cryptographic protocol.

\subsection{Notions of cryptographic security}

Claude Shannon, one of the first people to formalize the study of cryptography, proposed several notions of {\em cryptographic security.} To be precise, he defined the following:

\begin{itemize}
\item \textbf{Computational Security}: We say that a protocol is \textbf{computationally secure} for $n$ if at least $n$ operations are needed to break it. It is usually very difficult to prove that protocol is secure in this sense, as we would need to know all the possible strategies for attack. However, it is a good measure of when a system {\em isn't} secure, that is, when it fails to be secure for a relatively small $n$.

\item \textbf{Provable Security}: We say that a protocol is \textbf{provably secure} if we can link it with a `hard' problem, cryptographic or not, in such a way that solving the second problem will allow us to break the encryption. In that case, we know that we need at least as many operations to break the code as we need to solve the second problem. Typically, the `hard' problem is in {\sc np} but believed to not be in {\sc p}. Many of the cryptographic protocols in use today are based on this notion of security.

\item \textbf{Unconditional Security}: A protocol is \textbf{unconditionally secure} if it can't be broken even with unlimited computational resources; the eavesdropper simply does not have enough information to reconstruct the original message.
\end{itemize}
It should be clear that unconditional security implies both computational and provable security, and as such it would ideally be desirable to develop unconditionally secure cryptographic protocols. However, such protocols tend to be unpractical and as such few of them are known, with a notable example being Vernam's one time pad \cite{stinson2005cryptography}. However, the setup of the Russian cards, which presupposes a secure dealing phase, provides a convenient setup for developing unconditionally secure protocols.

\subsection{Related work}

The Russian cards problem may be traced back to Kirkman \cite{kirkman1847problem}, but recently it has received renewed attention after its inclusion in the 2000 Mathematics Olympiad \cite{hvd.studlog:2003}. One of the solutions for deals of distribution type $(3,3,1)$ uses the Fano plane, a special case of a combinatorial design, which can also be used for many other distribution types \cite{albert2003safe}. Another solution uses modular arithmetic, which can also be generalized for many distribution types where the eavesdropper holds one card \cite{cordonetal:2012}. These solutions use only two announcements, but some cases are known to require more. A solution using three announcements for $(4,4,2)$ is reported in \cite{threesteps}, and a four-step protocol for $c=O(a^2)$ and $b=O(c^2)$ is presented in \cite{cordon2013colouring}. The solution we will work with in this paper is similar to the one reported in \cite{cordon2013geometric}, which also takes two steps. The Russian cards problem has also been generalized to a larger number of agents in \cite{duan2010,sadi}.

However, while the protocols mentioned above provide unconditionally secure solutions to the Russian cards problem in that the eavesdropper may not {\em know with certainty} who holds a given card, that does not mean that she may not have a high probability of {\em guessing} this information correctly. To this end, stronger notions of security are studied in \cite{swanson2014combinatorial}. There, a distinction is made between {\em weak} and {\em perfect} security; in perfectly secure solutions, Cath does not acquire any probabilistic information about the ownership of any specific card. All of the above solutions provide weak security in this sense, but Swanson and Stinson show how designs may be used to achieve perfect security, an idea further developed in \cite{swanson2014additional}.

The solutions we present here will provide an intermediate level of security between weak and perfect, controlling the amount of probabilistic information that may be acquired by the eavesdropper, while having the advantage of being much easier to construct than perfectly secure solutions.

\section{A Worked Example}

We will motivate the work in this article with a relatively small example. Let's suppose we have $49$ cards, with Alice holding $7$, Cath holding $5$ and Bob the rest. In this case, Alice can take advantage of the fact that there is a field $\mathbb F_7$ with $7$ elements (the quotient $\mathbb Z/(7)$ forms a field), and thus may identify each point in the two-dimensional vector space over $\mathbb F_7$, which we will denote $\mathbbm{F}_7^2$, with a card. Moreover, she can do this in such a way that her cards (marked by $\CartaAlicia$) form a line. Suppose then that Cath holds the cards marked by $\CartaCath$, while Bob holds the rest of the cards ($\CartaBob$).

\begin{figure}[h]
\begin{center}
\begin{tikzpicture}[scale=0.6]
\node (0) at (0,0) {$\CartaAlicia$};
\node (1) at (1,0) {$\CartaAlicia$};
\node (2) at (2,0) {$\CartaAlicia$};
\node (3) at (3,0) {$\CartaAlicia$};
\node (4) at (4,0) {$\CartaAlicia$};
\node (5) at (5,0) {$\CartaAlicia$};
\node (6) at (6,0) {$\CartaAlicia$};

\node (01) at (0,1) {$\CartaBob$};
\node (11) at (1,1) {$\CartaBob$};
\node (21) at (2,1) {$\CartaBob$};
\node (31) at (3,1) {$\CartaBob$};
\node (41) at (4,1) {$\CartaBob$};
\node (51) at (5,1) {$\CartaBob$};
\node (61) at (6,1) {$\CartaBob$};

\node (02) at (0,2) {$\CartaBob$};
\node (12) at (1,2) {$\CartaBob$};
\node (22) at (2,2) {$\CartaBob$};
\node (32) at (3,2) {$\CartaBob$};
\node (42) at (4,2) {$\CartaBob$};
\node (52) at (5,2) {$\CartaBob$};
\node (62) at (6,2) {$\CartaBob$};

\node (03) at (0,3) {$\CartaBob$};
\node (13) at (1,3) {$\CartaBob$};
\node (23) at (2,3) {$\CartaBob$};
\node (33) at (3,3) {$\CartaBob$};
\node (43) at (4,3) {$\CartaBob$};
\node (53) at (5,3) {$\CartaBob$};
\node (63) at (6,3) {$\CartaBob$};

\node (04) at (0,4) {$\CartaBob$};
\node (14) at (1,4) {$\CartaBob$};
\node (24) at (2,4) {$\CartaBob$};
\node (34) at (3,4) {$\CartaBob$};
\node (44) at (4,4) {$\CartaBob$};
\node (54) at (5,4) {$\CartaBob$};
\node (64) at (6,4) {$\CartaBob$};

\node (05) at (0,5) {$\CartaCath$};
\node (15) at (1,5) {$\CartaCath$};
\node (25) at (2,5) {$\CartaCath$};
\node (35) at (3,5) {$\CartaCath$};
\node (45) at (4,5) {$\CartaCath$};
\node (55) at (5,5) {$\CartaBob$};
\node (65) at (6,5) {$\CartaBob$};

\node (06) at (0,6) {$\CartaBob$};
\node (16) at (1,6) {$\CartaBob$};
\node (26) at (2,6) {$\CartaBob$};
\node (36) at (3,6) {$\CartaBob$};
\node (46) at (4,6) {$\CartaBob$};
\node (56) at (5,6) {$\CartaBob$};
\node (66) at (6,6) {$\CartaBob$};

\end{tikzpicture}
\end{center}
\caption{Alice assigns each card to a point on the plane in such a way that her hand forms a line. She does not know how the other cards will fall, since she can only see her own hand. In this example, all of Cathy's cards happen to fall on another line.}
\label{FirstAnuncio}
\end{figure}

Alice then announces how she has distributed the cards on the plane. In this particular announcement, Cath's cards all fall within the same line. This is an extreme case, but it is a real possibility, as Alice has no knowledge of Cath's hand when she makes her announcement. Bob and Cath know that Alice's hand falls on a line, but they do not know which line. Bob then knows exactly which cards Alice holds (since there is only one complete line that he does not hold), but Cath does not. However, she may consider it more likely that Alice holds one card over another. To illustrate this, let us consider the points labeled $x$ and $y$ in Figure \ref{XYlines}.

\begin{figure}
\begin{center}
\begin{tikzpicture}[scale=0.6]
\node (0) at (0,0) {$\CartaDesc$};
\node (1) at (1,0) {$\CartaDesc$};
\node (2) at (2,0) {$\CartaDesc$};
\node (3) at (3,0) {$\CartaDesc$};
\node (4) at (4,0) {$\CartaDesc$};
\node (5) at (5,0) {$\CartaDesc$};
\node (6) at (6,0) {$\CartaDesc$};

\node (01) at (0,1) {$\CartaDesc$};
\node (11) at (1,1) {$\CartaDesc$};
\node (21) at (2,1) {$\CartaDesc$};
\node (31) at (3,1) {$\CartaDesc$};
\node (41) at (4,1) {$\CartaDesc$};
\node (51) at (5,1) {$\CartaDesc$};
\node (61) at (6,1) {$\CartaDesc$};

\node (02) at (0,2) {$\CartaDesc$};
\node (12) at (1,2) {$\CartaDesc$};
\node (22) at (2,2) {$\CartaDesc$};
\node (32) at (3,2) {$\CartaDesc$};
\node (42) at (4,2) {$\CartaDesc$};
\node (52) at (5,2) {$\CartaDesc$};
\node (62) at (6,2) {$\CartaDesc$};

\node (03) at (0,3) {$\CartaDesc$};
\node (13) at (1,3) {$\CartaDesc$};
\node (23) at (2,3) {$\CartaDesc$};
\node (33) at (3,3) {$\CartaDesc$};
\node (43) at (4,3) {$\CartaDesc$};
\node (53) at (5,3) {$\CartaDesc$};
\node (63) at (6,3) {$\CartaDesc$};

\node (04) at (0,4) {$\CartaDesc$};
\node (14) at (1,4) {$\CartaDesc$};
\node (24) at (2,4) {$\CartaDesc$};
\node (34) at (3,4) {$\CartaDesc$};
\node (44) at (4,4) {$\CartaDesc$};
\node (54) at (5,4) {$\CartaDesc$};
\node (64) at (6,4) {$\CartaDesc$};

\node (05) at (0,5) {$\CartaCath$};
\node (15) at (1,5) {$\CartaCath$};
\node (25) at (2,5) {$\CartaCath$};
\node (35) at (3,5) {$\CartaCath$};
\node (45) at (4,5) {$\CartaCath$};
\node (55) at (5,5) {$\CartaDesc$};
\node[label=right:x] (65) at (6,5) {$\CartaAlicia$};

\node (06) at (0,6) {$\CartaDesc$};
\node (16) at (1,6) {$\CartaDesc$};
\node (26) at (2,6) {$\CartaDesc$};
\node (36) at (3,6) {$\CartaDesc$};
\node (46) at (4,6) {$\CartaDesc$};
\node (56) at (5,6) {$\CartaDesc$};
\node (66) at (6,6) {$\CartaDesc$};

\node (0) at (9,0) {$\CartaDesc$};
\node (1) at (10,0) {$\CartaDesc$};
\node (2) at (11,0) {$\CartaDesc$};
\node (3) at (12,0) {$\CartaDesc$};
\node (4) at (13,0) {$\CartaDesc$};
\node (5) at (14,0) {$\CartaDesc$};
\node (6) at (15,0) {$\CartaDesc$};

\node (01) at (9,1) {$\CartaDesc$};
\node (11) at (10,1) {$\CartaDesc$};
\node (21) at (11,1) {$\CartaDesc$};
\node (31) at (12,1) {$\CartaDesc$};
\node (41) at (13,1) {$\CartaDesc$};
\node (51) at (14,1) {$\CartaDesc$};
\node (61) at (15,1) {$\CartaDesc$};

\node (02) at (9,2) {$\CartaDesc$};
\node (12) at (10,2) {$\CartaDesc$};
\node (22) at (11,2) {$\CartaDesc$};
\node (32) at (12,2) {$\CartaDesc$};
\node (42) at (13,2) {$\CartaDesc$};
\node (52) at (14,2) {$\CartaDesc$};
\node (62) at (15,2) {$\CartaDesc$};

\node (03) at (9,3) {$\CartaDesc$};
\node (13) at (10,3) {$\CartaDesc$};
\node (23) at (11,3) {$\CartaDesc$};
\node (33) at (12,3) {$\CartaDesc$};
\node (43) at (13,3) {$\CartaDesc$};
\node (53) at (14,3) {$\CartaDesc$};
\node (63) at (15,3) {$\CartaDesc$};

\node (04) at (9,4) {$\CartaDesc$};
\node (14) at (10,4) {$\CartaDesc$};
\node (24) at (11,4) {$\CartaDesc$};
\node (34) at (12,4) {$\CartaDesc$};
\node (44) at (13,4) {$\CartaDesc$};
\node (54) at (14,4) {$\CartaDesc$};
\node (64) at (15,4) {$\CartaDesc$};

\node (05) at (9,5) {$\CartaCath$};
\node (15) at (10,5) {$\CartaCath$};
\node (25) at (11,5) {$\CartaCath$};
\node (35) at (12,5) {$\CartaCath$};
\node (45) at (13,5) {$\CartaCath$};
\node (55) at (14,5) {$\CartaDesc$};
\node (65) at (15,5) {$\CartaDesc$};

\node (06) at (9,6) {$\CartaDesc$};
\node[label=above:y] (16) at (10,6) {$\CartaAlicia$};
\node (26) at (11,6) {$\CartaDesc$};
\node (36) at (12,6) {$\CartaDesc$};
\node (46) at (13,6) {$\CartaDesc$};
\node (56) at (14,6) {$\CartaDesc$};
\node (66) at (15,6) {$\CartaDesc$};

\draw[dotted] (0,5) -- (6,5);
\draw (6,0) -- (6,6);
\draw (4.333,0) -- (6,5);
\draw (5,6) -- (6,5);
\draw (1,0) -- (6,5);
\draw (6,5) -- (3.5,0);
\draw (6,5) -- (5,0);

\draw (6,5) -- (4.75,0);

\draw (9,6) -- (15,6);
\draw[dotted] (10,6) -- (9,5);
\draw[dotted] (10,6) -- (15,1);
\draw[dotted] (10,6) -- (15,3.5);
\draw[dotted] (10,6) -- (10,0);
\draw[dotted] (10,6) -- (15,4.333);
\draw (10,6) -- (15,4.75);
\draw (10,6) -- (15,5);

\node (65) at (6,5) {$\CartaAlicia$};

\end{tikzpicture}
\end{center}
\caption{Lines that Cath may discard from Alice's announcement. It is important to note that most of the lines are truncated, as the natural representation of $\mathbbm{F}^2_7$ is as a torus and lines are harder to visualize in two dimensions.}
\label{XYlines}
\end{figure}

First we will take a look at $x$. Cath knows that, in order for Alice to hold $x$, one of the lines that passes through $x$ must be Alice's hand. We draw these lines on the plane.

Cath knows that not all the lines that pass through $x$ can be Alice's hand, because if a line contains a card that belongs to Cath, it clearly cannot be held in its entirety by Alice . In this case, only one line fits that description, so Cath takes it out of consideration. We denote this by drawing the line dotted. Every point in the plane has $8$ lines that cross it; therefore, the point $x$ still belongs to $7$ hands that could possibly belong to Alice.

However, this is not the case for all cards that Cath does not hold. Let us now turn our attention to $y$. While $x$ was colinear with Cath's hand, all the lines that contain $y$ and one of Cath's cards are different. In this case Cath can discard more lines than she could when considering $x$. Only $3$ possible lines remain, compared to the $7$ lines that pass through $x$ and avoid Cath's hand. Therefore, it seems to Cath that the point $x$ would be more likely to belong to Alice's hand than the point $y$ as there are more possible hands that contain it. Before the announcement, both cards had the same probability to be in Alice's hand but after the announcement, $x$ seems far more likely.

Note that the total number on lines in the announcement is $56$. We also know that $36$ of these lines contain a card that Cath holds. This is because there are $8$ lines touching each point, but the $5$ points all share one line. Therefore Alice's hand is one of the $20$ lines that avoid Cath's hand. Of those $20$ only three contain $y$ compared to the $7$ that contain $x$. Thus, it seems to Cath that there is a $\nicefrac{7}{20}=0.35$ probability that Alice holds $x$ compared to $\nicefrac{3}{20}=0.15$ that she holds $y$. Thus, according to the information that Cath has, it is more than twice as likely that Alice holds $x$ as it is that she holds $y$.

In this case, we know neither of the cards actually belongs to Alice, but we want to be able to quantify this information and control it, especially in higher dimensions where it is not as simple to visualize. Our goal is to show that, by choosing different parameters appropriately, we can make the different probabilities be arbitrarily close to each other. But first we need some preliminaries to make this precise.

\section{Strategies and Probabilistic Security}

In this section we will set up the basic concepts needed to formalize the Russian cards problem and different notions of security that one may require from its possible solutions. We will assume that Alice holds $a$ cards, Bob $b$ and Cath $c$, and $\Omega$ is the set of cards with $|\Omega|=a+b+c$. A \textbf{deal (of size $(a,b,c)$)} is a partition $(A,B,C)$ of $\Omega$ such that $|A|=a$, $|B|=b$ and $|C|=c$; each of $A,B,C$ represent the hand of Alice, Bob and Cath, respectively.

\subsection{Equitable strategies}

In most solutions to the Russian cards problem, Alice makes an announcement, after which Bob knows the entire deal and thus can make a second (trivial) announcement where he tells Alice which cards Cath holds. Thus we need only model Alice's first announcement, and we follow \cite{swanson2014combinatorial} in referring to the way that Alice is to choose her announcement as a {\em strategy.}

Suppose that Alice holds $a$ cards, Bob holds $b$ and Cath holds $c$. Given a set $X$ and a natural number $n$, we denote by $X\choose n$ the set of $n$-element subsets of $X$, and we will refer to such sets as {\em $n$-sets.} We denote the cardinality of $X$ by $|X|$. A possible hand for Alice is then an element of $\Omega \choose a$. In Alice's first announcement she gives a set of possible hands that she may hold, and thus we may consider an announcement simply as a set $\mathcal A\subset {\Omega \choose a}$.

However, there are many possible announcements that may inform Bob of Alice's hand, and it may be convenient for Alice to randomize from all such possible announcements. Thus a strategy for Alice consists on a probability distribution among the possible announcements that she may choose from.

\begin{define}
A \textbf{strategy (on $\Omega \choose a$)} is a function $\mathfrak S$ that assigns to each hand $A\in {\Omega \choose a}$ a probability distribution over $2^{\Omega\choose a}$. We denote the probability of an announcement $\mathcal{A}$ given the hand $A$ as $P_{\mathfrak{S}}(\mathcal{A}|A)$.

Given a strategy $\mathfrak{S}$ and a hand $A$, we will say that $\mathcal A$ is a {\em possible announcement} if $P_\mathfrak{S}(\mathcal{A}|A)>0$. The set of possible announcements will be denoted by $\mathfrak{S}_A$.
\end{define}

When it is clear from context, we will drop the subindex $\mathfrak S$ and write simply $P(\mathcal{A}|A)$ to simplify notation. It will also be convenient for computations if the number of possible announcements is independent of Alice's hand. If we could guarantee that there are $\ParEqui$ possible announcements for each hand, we could always assign a probability of $\nicefrac  1\ParEqui$ to each individual announcement. If a strategy has this property, we will say it is equitable \cite{swanson2014combinatorial}.

\begin{define}
A strategy $\mathfrak S$ is \textbf{equitable} if there exists a positive integer $\ParEqui$ such that, for every $a$-set $A$, $|\mathfrak{S}_A|=\ParEqui$ and the probability of choosing a particular announcement $\mathcal{A} \in \mathfrak{S}_A$ is $P(\mathcal{A}|A)=1/\ParEqui$.
\end{define}

One advantage of equitable strategies is that we need less information to specify them than more general strategies. In particular, we may model equitable strategies merely as a function
\[\mathfrak S\colon\textstyle{\Omega\choose a}\to 2^{2^{\Omega\choose a}},\]
where $\mathfrak S_A$ is the set of announcements with positive probability (and thus with probability $\nicefrac 1\ParEqui$). Since the geometric strategy, which will be our main focus, is equitable, we will adopt this presentation.

The first condition that a two-step solution to the Russian cards problem should satisfy is that Bob should be informed of Alice's hand after an announcement. Let us make this precise. First, we introduce an abuse of notation that we will use throughout the text.

If $\mathcal X\subset 2^\Omega$ and $Y\subset \Omega$, define
\[\mathcal X\setminus Y=\{X\in \mathcal X : X\cap Y=\varnothing\}.\]
Thus, $\mathcal X\setminus Y$ is the set of elements of $\mathcal X$ avoiding $Y$.

\begin{define}
Fix integers $a,b,c$ and a deck $\Omega$ with $|\Omega|=a+b+c$. A strategy $\mathfrak S$ on $\Omega\choose a$ is \textbf{informative for $(a,b,c)$} if, for every $A\in {\Omega\choose a}$ and every $B\in {{\Omega\setminus A}\choose b}$, $\mathfrak S_A\setminus B=\{A\}$.
\end{define}
Thus after an informative announcement, Bob knows exactly which hand $A$ Alice is holding. But an informative strategy may also give Cath information, yet we also require for Alice's strategy to be secure.

\subsection{Probabilistic Security}

Before Alice makes an announcement, Cath knows that Alice can possibly hold any hand that doesn't contain one of Cath's cards. Hence, there are ${a+b}\choose{a}$ possible hands for Alice. However, after an announcement, Cath can discard any hand that isn't found in the announcement. After doing so, it is possible that Cath acquires new information about the cards she does not hold. In particular, she may know that there is a high probability that Alice holds a given card. If Alice and Bob want to communicate securely, it would be desirable to avoid giving Cath such information.

There are three different notions of probabilistic security for strategies: weak, perfect, and our notion of $\varepsilon$-strong security, which lies between the other two. Unconditional security is equivalent to weak security. If we wanted to avoid Cath learning any probabilistic information after an announcement, we would need to ensure that no card seems more likely after the announcement than it did before. For this, the number of hands in the announcement (after Cath eliminates the ones which have a card that she holds) that contain a given card must be equal for every card that Cath does not hold. In this case, the probability of Alice having a set card should stay the same after Alice's announcement. As a matter of fact, we know the value of this probability; we must only count the hands that could contain that card given Cath's hand and divide it by the number of remaining hands in the announcement:
\[P(x \in A|C)=\frac{{a+b-1 \choose a-1}}{{a+b \choose a}}=\frac{a}{a+b}.\]
If this number stays constant after Alice's announcement, we will say that Alice's strategy is perfectly secure.

\begin{define}
A strategy $\mathfrak{S}$ on $\Omega\choose a$ is \textbf{perfectly secure for $(a,b,c)$} if for every $C\in{\Omega\choose c}$, every card $x\in \Omega\setminus C$, and every announcement $\mathcal A$ with $P(\mathcal A|C)\not=0$, we have that
\[\frac{P(x \in A|C, \mathcal{A})}{P(x \in A|C)} = 1.\]
\end{define}

This notion is equivalent to $1$-perfect security in \cite{swanson2014combinatorial} and represents Cath's inability to gleam information about the position of individual cards. Compare this to {\em weak security,} where we only require that Cath is not certain about the position of any card she does not hold.

\begin{define}
A strategy $\mathfrak{S}$ on $\Omega\choose a$ is \textbf{weakly secure for $(a,b,c)$} if for every $C\in{\Omega\choose c}$, every card $x\in \Omega\setminus C$, and every announcement $\mathcal A$ with $P(\mathcal A|C)\not=0$, we have that
\[0<\frac{P(x \in A|C, \mathcal{A})}{P(x \in A|C)} < 1.\]
\end{define}

In \cite{swanson2014additional,swanson2014combinatorial}, the authors present examples of perfectly secure strategies when Cath has at most $3$ cards. Due to the rigidity needed to ensure this level of security, it is not clear whether perfectly secure strategies can be constructed when Cath holds more cards. Instead, we will define an intermediate level of security, where the constraint is relaxed so we can have more flexibilty and can work in cases where Cath's hand is larger. In fact, this notion will permit us to find secure protocols for any possible hand size that Cath may hold.

\begin{define}
Let $\varepsilon>0$. A strategy $\mathfrak{S}$ on $\Omega\choose a$ is \textbf{$\varepsilon$-strongly secure for $(a,b,c)$} if for every $C\in{\Omega\choose c}$, every card $x\in \Omega\setminus C$, and every announcement $\mathcal A$ with $P(\mathcal A|C)\not=0$, we have that
\[\left |\frac{P(x \in A|C, \mathcal{A})}{P(x \in A|C)}-1 \right |<\varepsilon.\]
\end{define}

As mentioned above, equitable strategies are useful for simplifying computations. In particular, the above probabilities may be computed by counting. The following result can be found in \cite{swanson2014combinatorial}.

\begin{lema}
Let $\mathfrak S$ be an equitable strategy on $\Omega \choose a$ and $(A,B,C)$ be a deal. Suppose that $C\in{\Omega\choose c}$ and $\mathcal A$ is an announcement with $P(\mathcal A| C)>0$ and $A\in \mathcal A$. Then, $P(A|C,\mathcal{A})=\frac{1}{|\mathcal{A}\setminus C|}$.
\end{lema}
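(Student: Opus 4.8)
The plan is to compute the posterior $P(A\mid C,\mathcal A)$ by Bayes' theorem and then use equitability to collapse everything into a single count. Writing out Bayes' rule,
\[
P(A\mid C,\mathcal A)=\frac{P(\mathcal A\mid A,C)\,P(A\mid C)}{P(\mathcal A\mid C)},
\]
and expanding the denominator over exactly those hands $A'$ that Cath cannot rule out a priori, namely the hands with $A'\cap C=\varnothing$, gives
\[
P(\mathcal A\mid C)=\sum_{A'\in\tbinom{\Omega\setminus C}{a}}P(\mathcal A\mid A',C)\,P(A'\mid C).
\]
So the first task is to identify the three ingredients $P(\mathcal A\mid A',C)$, $P(A'\mid C)$, and the normalisation.

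For the second step I would observe that Alice selects her announcement knowing only her own hand, so the strategy is independent of $C$, i.e.\ $P(\mathcal A\mid A',C)=P(\mathcal A\mid A')$. Equitability then gives $P(\mathcal A\mid A')=\nicefrac{1}{\ParEqui}$ when $\mathcal A\in\mathfrak S_{A'}$ and $0$ otherwise. The prior is uniform over the $\tbinom{a+b}{a}$ hands disjoint from $C$, so $P(A'\mid C)=\nicefrac{1}{\tbinom{a+b}{a}}$ for each such $A'$, in line with the count $\nicefrac{a}{a+b}$ already recorded in the text. Substituting, the common factors $\nicefrac{1}{\ParEqui}$ and $\nicefrac{1}{\tbinom{a+b}{a}}$ cancel between numerator and denominator, leaving
\[
P(A\mid C,\mathcal A)=\frac1N,\qquad N=\bigl|\{A'\in\tbinom{\Omega\setminus C}{a}:\mathcal A\in\mathfrak S_{A'}\}\bigr|.
\]
It then remains only to prove $N=|\mathcal A\setminus C|$.

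The crux, and the step I expect to be the main obstacle, is the set identity
\[
\{A'\in\tbinom{\Omega\setminus C}{a}:\mathcal A\in\mathfrak S_{A'}\}=\mathcal A\setminus C,
\]
which, since both sides already restrict to hands avoiding $C$, reduces to the equivalence $\mathcal A\in\mathfrak S_{A'}\iff A'\in\mathcal A$. The forward direction is truthfulness: an announcement is, by definition, a list of hands Alice claims she might hold, so any hand from which $\mathcal A$ is made must itself appear in $\mathcal A$ (informativeness encodes the same fact, since it lets Bob recover the true hand as the unique element of the announcement avoiding his own). The reverse direction is the self-consistency of an announcement: $\mathcal A$ is precisely the collection of hands from which Alice would produce this same public announcement, so every $A'\in\mathcal A$ indeed satisfies $\mathcal A\in\mathfrak S_{A'}$. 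I would verify this reverse inclusion with care, as it is the only point at which a property beyond equitability is invoked; for the geometric strategy it holds by construction, the announcement being the full family of lines, with each candidate line one from which that very announcement arises. Once the equivalence is established, $N=|\mathcal A\setminus C|$ and the lemma follows.
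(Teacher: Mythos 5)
The paper itself offers no proof of this lemma: it is stated as a known result, imported from \cite{swanson2014combinatorial}, so there is no in-paper argument to compare yours against. On its own merits, your Bayes computation is the standard one (and essentially the argument in the cited source): with the uniform prior on deals, $P(\mathcal A\mid A',C)=P(\mathcal A\mid A')$, and equitability makes every nonzero likelihood equal to $\nicefrac 1\ParEqui$, so the posterior collapses to the uniform distribution on $N=\bigl|\{A'\in\binom{\Omega\setminus C}{a}:\mathcal A\in\mathfrak S_{A'}\}\bigr|$ hands. That part is correct and complete.

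You are also exactly right that all remaining content sits in the equivalence $\mathcal A\in\mathfrak S_{A'}\iff A'\in\mathcal A$, and your instinct to treat its reverse direction with suspicion is vindicated: under this paper's literal definitions of \emph{strategy} and \emph{equitable}, that direction can fail, and with it the lemma. For a concrete counterexample, take an equitable strategy with $\ParEqui=1$ in which hand $A_1$ always announces $\mathcal A=\{A_1,A_2\}$ while hand $A_2$ always announces $\{A_2,A_3\}$ (extend arbitrarily to the other hands); if $C$ is disjoint from $A_1$ and $A_2$, then after the announcement $\mathcal A$ Cath assigns probability $1$ to $A_1$ and $0$ to $A_2$, not $\nicefrac 12=\nicefrac1{|\mathcal A\setminus C|}$. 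Here truthfulness holds but your ``self-consistency'' fails, so it is genuinely an extra hypothesis, not a consequence of equitability or of the intended reading of an announcement. The lemma is true because in \cite{swanson2014combinatorial} a strategy is by definition a fixed family of announcements from which Alice, holding $A$, chooses uniformly among the members containing $A$ --- a definition that makes both directions of your equivalence automatic; the present paper inherits that framework implicitly. So your proof is the right one provided this property is stated as a standing assumption, and, as you note, it does hold for the geometric strategy by construction: $A'\in\mathcal A[f]$ means $f[A']$ is an $\alpha$-plane, i.e.\ $f$ is suitable for $A'$, i.e.\ $\mathcal A[f]\in\mathfrak G_{A'}$.
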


In other words, the probability that $A$ is Alice's hand given Cath's hand $C$ and the announcement $\mathcal{A}$ (when $A$ is a valid hand given $C$) is given by the quotient of one over the number of hands in the announcement that avoid $C$.

Thus the probability of Alice having a set hand $A$ according to Cath is $\nicefrac{1}{|\mathcal{A}\setminus C|}$. However, what we want to calculate is the probability that Alice holds a given card $x$. For this, we introduce a new abuse of notation: for $\mathcal X\subseteq 2^{\Omega}$ and $y\in \Omega$, set
\[\mathcal X_y=\{X\in \mathcal X : y\in X\}.\]
Thus for $Z\subset \Omega$, $\mathcal X_y\setminus Z$ denotes the set of elements of $\mathcal X$ which contain $y$ but avoid $Z$. The following can also be found in \cite{swanson2014combinatorial}.

\begin{lema}
Let $\mathfrak S$ be an equitable strategy on $\Omega \choose a$ and $(A,B,C)$ be a deal. If $z\in \Omega\setminus C$, then
\[P(z\in A|C,\mathcal A)=\frac{|\mathcal{A}_z \setminus C|}{|\mathcal{A} \setminus C|}.\]
\end{lema}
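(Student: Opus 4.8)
The plan is to compute $P(z\in A\mid C,\mathcal A)$ by partitioning the event $\{z\in A\}$ into the disjoint single-hand events $\{A=A'\}$ and summing the uniform probabilities supplied by the previous lemma. The key observation is that the earlier lemma already pins down $P(A'\mid C,\mathcal A)$ for every admissible hand $A'$, so all that remains is to identify which hands are admissible and witness that $z$ lies in them.

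First I would identify the support of Alice's hand once Cath's hand $C$ and the announcement $\mathcal A$ are fixed. Since Alice cannot hold any card belonging to Cath, and since conditioning on $\mathcal A$ discards every hand not appearing in the announcement, the hands Alice can possibly hold given $(C,\mathcal A)$ are exactly those in $\mathcal A\setminus C$. Consequently the event $\{z\in A\}$, that Alice's actual hand contains the card $z$, coincides with the disjoint union $\bigcup_{A'\in\mathcal A_z\setminus C}\{A=A'\}$, because the relevant hands are precisely those that lie in $\mathcal A$, avoid $C$, and contain $z$; these are collected exactly by $\mathcal A_z\setminus C$ under the notation introduced above.

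Next, by additivity over this disjoint union, $P(z\in A\mid C,\mathcal A)=\sum_{A'\in\mathcal A_z\setminus C}P(A'\mid C,\mathcal A)$. For each $A'$ in the index set we have $A'\in\mathcal A$ and $A'\cap C=\varnothing$, so (together with the standing assumption $P(\mathcal A\mid C)>0$ that the conditioning requires) the hypotheses of the previous lemma are satisfied, giving $P(A'\mid C,\mathcal A)=1/|\mathcal A\setminus C|$. Since this value is independent of $A'$, the sum collapses to $|\mathcal A_z\setminus C|\cdot\bigl(1/|\mathcal A\setminus C|\bigr)$, which is exactly the asserted quotient.

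The argument is essentially bookkeeping, so I do not expect a deep obstacle; the one point that warrants care is ensuring the index set of the sum is exactly $\mathcal A_z\setminus C$ and nothing larger. I would verify explicitly that any hand either containing a card of $C$ or lying outside $\mathcal A$ contributes probability zero, so that such hands may be dropped from the sum, and conversely that every $A'\in\mathcal A_z\setminus C$ genuinely meets the hypotheses of the previous lemma, so that the uniform value $1/|\mathcal A\setminus C|$ legitimately applies to each surviving term.
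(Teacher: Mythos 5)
Your proposal is correct, and in fact the paper offers no proof of this lemma at all (it is stated as imported from Swanson and Stinson); your argument---partitioning $\{z\in A\}$ into the disjoint events $\{A=A'\}$ for $A'\in\mathcal A_z\setminus C$ and applying the preceding lemma's uniform value $1/|\mathcal A\setminus C|$ to each term---is exactly the standard derivation one would write out. The one point you rightly flag, that hands outside $\mathcal A$ contribute zero probability, is handled either by the implicit truthfulness of announcements (Alice only announces families containing her actual hand, so $P(\mathcal A\mid A')=0$ whenever $A'\notin\mathcal A$) or, even more simply, by observing that the preceding lemma's uniform probabilities over $\mathcal A\setminus C$ already sum to $1$, forcing everything outside $\mathcal A\setminus C$ to have posterior probability zero.
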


\section{Finite geometries}

The geometric strategy is convenient because it allows us to use many familiar results from linear algebra. One key difference when working over finite fields (instead of, say, $\mathbb R$) is that now it becomes relevant to count the number of points in a subspace, the number of subspaces touching a point, etc. These quantities will be useful more than once in this article.

First, we recall a general result about the cardinalities of finite fields. Results in this section are presented without proof; for a more thorough treatment of finite fields and finite geometry, the reader may consult a text such as \cite{dembowski1997,lidl1997finite}.

\begin{teorema}[Existence and Uniqueness of Finite Fields]
If $q$ is a natural number, there exists a finite field with cardinality $q$ if and only if $q$ is of the form $p^n$, with $p$ a prime and $n$ a positive integer. This field is unique up to isomorphism and is called the \textbf{Galois Field of order $q$}. We will denote it by $\mathbbm{F}_{q}$.
\end{teorema}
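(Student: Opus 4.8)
The plan is to break the statement into three parts: the necessity of $q$ being a prime power, the existence of a field for each prime power, and uniqueness up to isomorphism. For necessity, I would first argue that any finite field $F$ has prime characteristic. The additive order of the multiplicative identity $1$ generates a subring isomorphic to $\mathbb{Z}/(k)$ for some $k$; since $F$ has no zero divisors, $k$ must be prime, call it $p$. This prime subfield, isomorphic to $\mathbbm{F}_p=\mathbb{Z}/(p)$, makes $F$ into a finite-dimensional vector space over $\mathbbm{F}_p$, say of dimension $n$. Counting elements coordinate-wise then gives $|F|=p^n$, so the cardinality of any finite field is forced to be a prime power.

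For existence, given $q=p^n$, I would consider the polynomial $f(x)=x^q-x$ over $\mathbbm{F}_p$ and pass to a splitting field $K$. The key observation is that the set of roots of $f$ in $K$ is itself a field: closure under addition follows from the Frobenius identity $(a+b)^p=a^p+b^p$ iterated $n$ times, closure under multiplication from $(ab)^q=a^q b^q$, and closure under additive and multiplicative inverses is immediate. To guarantee that this root set has exactly $q$ elements, I would verify that $f$ is separable by computing its formal derivative $f'(x)=qx^{q-1}-1$, which equals $-1$ in characteristic $p$ and is therefore coprime to $f$; hence $f$ has $q$ distinct roots, and these roots form a field of order exactly $q$.

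For uniqueness, I would show that any field $F$ with $q$ elements must arise as a splitting field of $x^q-x$ over its prime subfield. Indeed, the multiplicative group $F^\times$ has order $q-1$, so by Lagrange's theorem every nonzero element satisfies $x^{q-1}=1$, and hence every element of $F$ (including $0$) is a root of $x^q-x$. Since this polynomial has degree $q$ and $F$ has $q$ elements, $F$ consists precisely of its roots, making it a splitting field of $x^q-x$. The desired isomorphism then follows from the general fact that splitting fields of a fixed polynomial over a fixed base field are unique up to isomorphism.

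The main obstacle I anticipate is the existence half, specifically the verification that the root set of $x^q-x$ is closed under the field operations; this is where the characteristic-$p$ arithmetic embodied in the Frobenius endomorphism does the real work, whereas the dimension-counting argument for necessity and the splitting-field reduction for uniqueness are comparatively routine once the right framework is in place. A secondary dependency is the uniqueness of splitting fields, which I would either invoke as a standard result or establish via the usual isomorphism-extension argument.
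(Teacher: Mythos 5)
Your proof is correct, but there is nothing in the paper to compare it against: the paper states this theorem as background and explicitly presents it \emph{without proof}, deferring to standard references such as Lidl and Niederreiter's \emph{Finite Fields}. Your argument is precisely the classical proof found in those texts, and all three parts are sound as written: the characteristic-$p$ and dimension-counting argument forces $|F|=p^n$; the splitting field of $x^q-x$ over $\mathbbm{F}_p$ works because the derivative is $-1$ (as $p\mid q$), giving $q$ distinct roots, and the root set is a subfield by the Frobenius identity (note it also contains $\mathbbm{F}_p$ by Fermat, so it \emph{is} the splitting field); and uniqueness follows since Lagrange's theorem applied to $F^{\times}$ shows any $q$-element field consists exactly of the roots of $x^q-x$, hence is a splitting field of that polynomial over its prime subfield, and splitting fields are unique up to isomorphism. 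The one external ingredient you rely on, uniqueness of splitting fields, is indeed standard and is appropriately flagged as such.
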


Our protocol will be based on affine subspaces of a finite vector space. For brevity, we will refer to an affine space of dimension $\alpha$ as an `$\alpha$-plane', while an $\alpha$-plane passing through the origin (i.e., a linear subspace of dimension $\alpha$) will be referred to as an `$\alpha$-space'.

\begin{define}
\label{alphaespacio}
Let $\mathbb V$ be a vector space over a field $\mathbb F$. We say that $W\subset \mathbb V$ is an \textbf{$\alpha$-space} if it is a an $\alpha$-dimensional subspace of $\mathbb V$.

A subset $U\subseteq \mathbb V$ is an \textbf{$\alpha$-plane} if it is of the form $x+W$, where $x\in\mathbb V$ and $W$ is an $\alpha$-space. Two $\alpha$-planes $X,Y$ are \textbf{parallel} if there exists $y\in\mathbb V$ such that $Y=y+X$.
\end{define}

Thus an $\alpha$-plane is similar to an $\alpha$-space, although it does not necessarily pass through the origin. If $\mathbb F$ is finite, then it is not difficult to count the number of points on an $\alpha$-plane.

\begin{lema}
\label{puntosalpha}
If $q$ is a prime power and $\mathbb V$ is a vector space over $\mathbb F_q$, then any $\alpha$-plane in $\mathbb{V}$ has exactly $q^\alpha$ points.
\end{lema}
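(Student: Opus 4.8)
The plan is to show that any $\alpha$-plane $U = x + W$, where $W$ is an $\alpha$-space, has exactly $q^\alpha$ points. Since translation by $x$ is a bijection on $\mathbb V$, the map $w \mapsto x + w$ gives a bijection between $W$ and $U$, so it suffices to count the points of the $\alpha$-space $W$ itself. Thus the problem reduces to counting the number of vectors in an $\alpha$-dimensional linear subspace over $\mathbb F_q$.

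To count the points of $W$, I would use the fact that $W$, being $\alpha$-dimensional, admits a basis $\{e_1, \dots, e_\alpha\}$. Every vector in $W$ can be written uniquely as a linear combination $\lambda_1 e_1 + \cdots + \lambda_\alpha e_\alpha$ with the scalars $\lambda_i \in \mathbb F_q$. The uniqueness of this representation (which is precisely what it means for $\{e_1,\dots,e_\alpha\}$ to be a basis) means that the map $(\lambda_1, \dots, \lambda_\alpha) \mapsto \sum_i \lambda_i e_i$ is a bijection from $\mathbb F_q^\alpha$ to $W$. Since $\mathbb F_q$ has exactly $q$ elements by the Existence and Uniqueness theorem quoted above, the set of coefficient tuples $\mathbb F_q^\alpha$ has exactly $q^\alpha$ elements, and hence so does $W$.

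Combining the two bijections, the $\alpha$-plane $U$ has the same cardinality as $\mathbb F_q^\alpha$, namely $q^\alpha$, which is the claim. The argument is almost entirely a matter of invoking standard linear-algebra facts (existence of a basis, unique coordinate representation) together with the finiteness of $\mathbb F_q$; the only genuine content is the observation that these facts, which are usually stated over arbitrary fields, carry over verbatim to the finite setting. The main obstacle, if any, is conceptual rather than technical: one must be careful that the coordinate map is well-defined and bijective even though the scalars range over a finite field, but this follows immediately from the defining property of a basis and requires no separate argument over $\mathbb F_q$. Since the paper states results in this section without proof, I expect this lemma to be dispatched in a single short line, essentially $|U| = |W| = |\mathbb F_q^\alpha| = q^\alpha$.
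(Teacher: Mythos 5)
Your proof is correct: the translation map $w\mapsto x+w$ is a bijection between the $\alpha$-space $W$ and the $\alpha$-plane $U=x+W$, and the coordinate map $(\lambda_1,\dots,\lambda_\alpha)\mapsto\sum_i\lambda_i e_i$ associated to a basis of $W$ is a bijection from $\mathbb F_q^\alpha$ onto $W$, giving $|U|=|W|=q^\alpha$. The paper itself states this lemma without proof, deferring to standard references on finite geometry, so there is no in-paper argument to compare against; yours is precisely the standard argument those references give, and it has no gaps.
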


Meanwhile, the intersection of two distinct $\alpha$-planes is either empty or a $\eta$-plane for some $\eta<\alpha$, which has the following consequence.

\begin{lema}
\label{Inter}
If $q$ is a prime power, $\mathbb V$ is a vector space over $\mathbb F_q$ and $U,W\subseteq \mathbb V$ are distinct $\alpha$-planes, then
\begin{enumerate}
\item \label{InterCap} $|U\cap W|\leq q^{\alpha-1}$ and
\item \label{InterCup} $|U\cup V|\geq 2q^\alpha-q^{\alpha-1}$.
\end{enumerate}
\end{lema}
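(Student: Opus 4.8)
The plan is to reduce everything to the linear (subspace) case and then apply inclusion--exclusion. Write the two planes as $U = u + V_1$ and $W = w + V_2$, where $V_1, V_2$ are $\alpha$-spaces in the sense of Definition \ref{alphaespacio}. I would prove part \ref{InterCap} first, since part \ref{InterCup} follows from it by a one-line counting argument.

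For part \ref{InterCap}, I would split into two cases according to whether $U$ and $W$ meet. If $U \cap W = \varnothing$, then $|U \cap W| = 0 \leq q^{\alpha-1}$ and there is nothing to do. If instead there is a point $p \in U \cap W$, then, since any point of an affine plane may serve as its base point, I can rewrite $U = p + V_1$ and $W = p + V_2$; a short verification then shows $U \cap W = p + (V_1 \cap V_2)$, so that $|U \cap W| = |V_1 \cap V_2|$. The key observation is that $V_1 \neq V_2$ in this case: if the direction spaces coincided, then $U = p + V_1 = p + V_2 = W$, contradicting that $U$ and $W$ are distinct. Since $V_1$ and $V_2$ are distinct $\alpha$-dimensional subspaces, their intersection is a proper subspace of $V_1$, hence has dimension at most $\alpha - 1$. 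Applying Lemma \ref{puntosalpha} to this subspace (which is itself an $\alpha'$-plane through the origin with $\alpha' \leq \alpha-1$) yields $|V_1 \cap V_2| \leq q^{\alpha - 1}$, completing part \ref{InterCap}.

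For part \ref{InterCup}, I would simply invoke inclusion--exclusion together with Lemma \ref{puntosalpha}: since $|U| = |W| = q^\alpha$, we have
\[ |U \cup W| = |U| + |W| - |U \cap W| = 2q^\alpha - |U \cap W| \geq 2q^\alpha - q^{\alpha-1}, \]
where the inequality uses the bound just established.

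The only delicate point is the case analysis in part \ref{InterCap}: one must argue carefully that distinctness of the affine planes forces distinctness of the direction spaces precisely when the planes intersect. When the planes are parallel but distinct they are in fact disjoint (a common point would force $u - w \in V_1 = V_2$, hence $U = W$), so that situation is already subsumed under the empty-intersection case. Everything else is routine once the affine intersection $U \cap W$ is identified with a coset of the linear intersection $V_1 \cap V_2$.
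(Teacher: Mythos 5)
Your proof is correct, and it follows exactly the route the paper itself indicates: the paper states Lemma \ref{Inter} without proof (results in that section are deferred to standard references), but prefaces it with the remark that the intersection of two distinct $\alpha$-planes is either empty or an $\eta$-plane with $\eta<\alpha$, and your coset identification $U\cap W=p+(V_1\cap V_2)$, combined with Lemma \ref{puntosalpha} and inclusion--exclusion, is precisely the standard argument filling in that sketch. You also handle the one delicate point correctly (distinct planes through a common point must have distinct direction spaces, while distinct parallel planes are disjoint), and you rightly read the statement's $|U\cup V|$ as the typo it is, namely $|U\cup W|$.
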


We may also fix a point $x$ in our vector space and instead ask how many $\alpha$-planes meet $x$. Here the dimension of $\mathbb V$ will be relevant. We will denote the $\delta$-dimensional vector space over $\mathbb F_q$ by $\mathbb F^\delta_q$.

\begin{define}
Fix a prime power $q$. Then, given positive integers $\alpha\leq \delta$, define
\[\kapa q\delta\alpha =\prod_{i=1}^\alpha\left ( \frac{q^\delta-q^{i-1}}{q^\alpha-q^{i-1}}\right ).\]
\end{define}

\begin{lema}
\label{kalpha}
Let $q$ be a prime power and $\delta\geq \alpha>0$. Then,
\begin{enumerate}

\item \label{kaplhaOne} Given $x\in\mathbb F^\delta_q$, the total number of $\alpha$-planes meeting $x$ is equal to $\kapa q\delta\alpha$.

\item \label{kalphaTwo} Given distinct points $x,y \in \mathbb F^\delta_q$, the number of $\alpha$-planes meeting both $x$ and $y$ is given by
\[\kkapa q\delta\alpha=\left ( \frac{q^\alpha-1}{q^\delta-1}\right )\kapa q\delta\alpha.\]

\item \label{kalphaThree} The total number of $\alpha$-planes in $\mathbb F^\delta_q$ is $q^{\delta-\alpha}\kapa q\delta\alpha$.

\end{enumerate}

\end{lema}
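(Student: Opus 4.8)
The plan is to derive all three counts from one foundational fact: that $\kapa q\delta\alpha$ is exactly the number of $\alpha$-dimensional linear subspaces (the $\alpha$-spaces) of $\mathbb F^\delta_q$, i.e. the Gaussian binomial coefficient. I would establish this by counting ordered tuples of $\alpha$ linearly independent vectors: there are $\prod_{i=0}^{\alpha-1}(q^\delta-q^i)$ of them, since after $i$ independent vectors have been chosen the next must avoid their $q^i$-element span. The same reasoning applied inside a fixed $\alpha$-space shows it has $\prod_{i=0}^{\alpha-1}(q^\alpha-q^i)$ ordered bases, and dividing gives $\prod_{i=1}^{\alpha}\frac{q^\delta-q^{i-1}}{q^\alpha-q^{i-1}}=\kapa q\delta\alpha$ as the number of $\alpha$-spaces.

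For the first claim I would set up a bijection between the $\alpha$-planes through a fixed point $x$ and the $\alpha$-spaces of $\mathbb F^\delta_q$. Any $\alpha$-plane through $x$ equals $x+W$ for a unique $\alpha$-space $W$: if $x\in y+W$ then $x+W=y+W$, and subtracting $x$ recovers $W$. Conversely each $\alpha$-space $W$ yields the plane $x+W$ through $x$, with distinct spaces giving distinct planes. Hence the number of $\alpha$-planes meeting $x$ is $\kapa q\delta\alpha$.

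For the third claim I would double-count incidences, i.e. pairs $(x,P)$ with $P$ an $\alpha$-plane and $x\in P$. Counting by points, each of the $q^\delta$ points lies on $\kapa q\delta\alpha$ planes by the first claim, for a total of $q^\delta\kapa q\delta\alpha$ incidences; counting by planes, if $N$ denotes the number of $\alpha$-planes then each carries exactly $q^\alpha$ points by Lemma \ref{puntosalpha}, for a total of $Nq^\alpha$. Equating the two gives $N=q^{\delta-\alpha}\kapa q\delta\alpha$.

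The second claim is where the \emph{main obstacle} lies. Fixing $x$, a plane $x+W$ through $x$ also contains $y$ exactly when $v:=y-x\in W$, and since $x\neq y$ this $v$ is a fixed nonzero vector; so the count reduces to the number of $\alpha$-spaces containing $v$. I would compute this by a second double count over pairs $(u,W)$ with $W$ an $\alpha$-space and $u$ a nonzero vector of $W$: there are $(q^\alpha-1)\kapa q\delta\alpha$ such pairs, since each of the $\kapa q\delta\alpha$ spaces has $q^\alpha-1$ nonzero vectors. The subtle step is showing that every nonzero vector lies in the \emph{same} number $M$ of $\alpha$-spaces (equivalently, that the count is independent of the chosen pair of distinct points); this follows because $GL_\delta(\mathbb F_q)$ acts transitively on nonzero vectors while permuting the $\alpha$-spaces. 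Granting it, $(q^\delta-1)M=(q^\alpha-1)\kapa q\delta\alpha$, so $M=\frac{q^\alpha-1}{q^\delta-1}\kapa q\delta\alpha=\kkapa q\delta\alpha$. Alternatively one could identify the $\alpha$-spaces through $v$ with the $(\alpha-1)$-spaces of the quotient $\mathbb F^\delta_q/\langle v\rangle$ and invoke the Gaussian-binomial recurrence, but the double count keeps everything within the elementary counting framework already used above.
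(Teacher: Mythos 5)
Your proposal is correct, but it reaches parts \ref{kalphaTwo} and \ref{kalphaThree} by genuinely different routes than the paper. For part \ref{kaplhaOne} you and the paper agree on the core computation (counting ordered tuples of independent vectors and dividing by the number of ordered bases of a fixed $\alpha$-space); you are in fact slightly more careful, since you spell out the bijection $W\mapsto x+W$ between $\alpha$-spaces and $\alpha$-planes through $x$, where the paper simply translates to assume $x=0$. For part \ref{kalphaTwo} the paper stays inside the basis-counting framework: it fixes the first basis vector $v_1=y$, which deletes the first factor from both the numerator and the denominator of $\kapa q\delta\alpha$ and yields $\kkapa q\delta\alpha=\bigl(\frac{q^\alpha-1}{q^\delta-1}\bigr)\kapa q\delta\alpha$ in a single step, with no symmetry argument needed. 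You instead double count pairs $(u,W)$ with $u$ a nonzero vector of an $\alpha$-space $W$, and you correctly flag the one nontrivial ingredient this requires: homogeneity, i.e.\ that every nonzero vector lies in the same number of $\alpha$-spaces, which you justify via the transitive action of $GL_\delta(\mathbb F_q)$. Your route explains conceptually where the factor $\frac{q^\alpha-1}{q^\delta-1}$ comes from (a ratio of incidence counts), at the cost of that extra lemma; the paper's route is shorter but more opaque. For part \ref{kalphaThree} the paper observes that the translates of a fixed $\alpha$-space partition $\mathbb F^\delta_q$ into $q^{\delta-\alpha}$ parallel classes and multiplies by the number of $\alpha$-spaces, whereas you double count point--plane incidences using part \ref{kaplhaOne} and Lemma \ref{puntosalpha}; both are valid, and your version has the small structural difference that it depends on part \ref{kaplhaOne} holding at every point, while the paper's parallel-class argument is independent of it.
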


\begin{proof}
For the first claim, we may assume without loss that $x=0$. Fix $q$ and let $k^\delta_\alpha$ denote the number of (ordered) sequences $v_1,\hdots,v_\alpha$ of linearly independent vectors in $\mathbb F^\delta_\alpha$. Observe that, for each $\alpha$-space $W$, there are exactly $k^\alpha_\alpha$ ordered bases for $W$, and thus the total number of $\alpha$-spaces is precisely $\nicefrac {k^\delta_\alpha}{k^\alpha_\alpha}$.

Thus it suffices to find an expression for $k^\delta_\alpha$. For $v_1$, we have $q^\delta-1$ options, since we may choose any vector in $\mathbb F^\delta_q$ save for $0$. For $v_2$, we have $q^\delta-q$ options, since we may choose any vector that does not lie in the line generated by $v_1$. More generally, for $v_i$ we have $q^\delta-q^{i-1}$ options, since we may choose any vector that does not lie in the $(i-1)$ space generated by $v_1,\hdots,v_{i-1}$. Thus the total number of options is
\[k^\delta_\alpha=\prod_{i=1}^\alpha (q^\delta-q^{i-1}),\]
and hence the total number of $\alpha$-spaces is
\[\dfrac {k^\delta_\alpha}{k^\alpha_\alpha}=\dfrac{\prod_{i=1}^\alpha (q^\delta-q^{i-1})}{\prod_{i=1}^\alpha (q^\alpha-q^{i-1})}=\kapa q\delta\alpha.\]

The second claim is proven in a similar fashion. As before, we may assume that $x=0$ and fix $y\not=0$. Then we proceed as above, except that we fix $v_1$ to be $y$. This removes the first factor from both the numerator and the denominator, giving us
\[\kkapa q\delta\alpha=\dfrac{\prod_{i=2}^\alpha (q^\delta-q^{i-1})}{\prod_{i=2}^\alpha (q^\alpha-q^{i-1})}=\left (\dfrac{q^\alpha-1}{q^\delta-1}\right) \kapa q\delta\alpha.\]

Finally, for the third claim, note that if we fix an $\alpha$-space $W$, the $\alpha$-planes parallel to $W$, together with $W$, form a partition of $\mathbb F^\delta_q$, hence there are $\nicefrac{q^\delta}{q^\alpha}=q^{\delta-\alpha}$ of them. But there are $\kapa q\delta\alpha$ different $\alpha$-spaces, and thus the total number of $\alpha$-planes is $q^{\delta-\alpha}\kapa q\delta\alpha$.
\end{proof}

\section{The Geometric Strategy}
\label{GeometricP}

We've informally presented the geometric solution to the Russian Cards problem, and will now formalize it to construct the Geometric Strategy. The protocol we will use is essentially presented in \cite{cordon2013geometric}. The basic idea is to construct a finite vector space where every point represents a different card and Alice's hand forms an $\alpha$-plane. Below, we use $f[X]$ to denote the set $\{f(x)\colon x\in X\}$. Each announcement is parametrized by a {\em suitable map.}

\begin{define}
Fix a prime power $q$, natural numbers $0 < \alpha< \delta$ and $A\in {\Omega\choose a}$. We define a {\em suitable map for $A$} to be a function $f\colon \Omega \to \mathbb F^\delta_q$ such that $f[A]$ is an $\alpha$-plane.

Given a suitable map $f$, we define
\[\mathcal A[f]=\{X\subset \Omega : f[X] \text{ is an $\alpha$-plane}\}.\]
\end{define}

The geometric strategy is then defined by letting Alice choose uniformly from all suitable maps $f$ and announcing $\mathcal A[f]$.

\begin{define}[The geometric strategy]
Let $q$ be a prime power and $0<\alpha<\delta$. Given $a$, $b$ and $c$ such that $a=q^\alpha$ and $a+b+c=q^\delta$, we define the \textbf{geometric strategy (with parameters $q,\alpha,\delta$),} denoted $\mathfrak G=\mathfrak G(q,\delta,\alpha)$, to be the strategy such that $\mathfrak G_A$ is the set of all announcements of the form $\mathcal A [f]$, where $f\colon \Omega\to \mathbb F^\delta_q$ is suitable for $A$, and Alice chooses uniformly from $\mathfrak G_A$.
\end{define}

This strategy generalizes that in \cite{cordon2013geometric} where $\alpha=\delta-1$, although that article also considers the case where Alice holds more than one plane. Let us now show that the strategy is equitable.

\begin{lema}
Let $q$ be a prime power, $0<\alpha<\delta$ and $a=q^\alpha$, and let $\mathfrak G=\mathfrak G(q,\delta,\alpha)$. Then, if $A,A' \in {\Omega\choose a}$, $|\mathfrak{G}_A|=|\mathfrak{G}_{A'}|$. 
\end{lema}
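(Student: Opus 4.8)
The statement asserts that the number of possible announcements $|\mathfrak{G}_A|$ is independent of Alice's hand $A$. Since each announcement has the form $\mathcal{A}[f]$ for a suitable map $f$, and Alice chooses uniformly among them, the cleanest approach is to set up a bijection between the announcement sets $\mathfrak{G}_A$ and $\mathfrak{G}_{A'}$ for arbitrary hands $A, A' \in \binom{\Omega}{a}$. The plan is to exploit the homogeneity of the affine group acting on $\mathbb{F}_q^\delta$: any $\alpha$-plane can be mapped to any other $\alpha$-plane by an invertible affine transformation.

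First I would fix a reference $\alpha$-plane $P_0 \subseteq \mathbb{F}_q^\delta$ (for instance the span of the first $\alpha$ standard basis vectors) and, given a hand $A$, note that a suitable map for $A$ is precisely a function $f\colon \Omega \to \mathbb{F}_q^\delta$ such that $f[A]$ is an $\alpha$-plane. The key observation is that the announcement $\mathcal{A}[f]$ depends on $f$ only up to post-composition with the affine group: if $T$ is an invertible affine transformation of $\mathbb{F}_q^\delta$, then $X \mapsto f[X]$ is an $\alpha$-plane if and only if $X \mapsto (T\circ f)[X]$ is an $\alpha$-plane, since $T$ sends $\alpha$-planes to $\alpha$-planes bijectively. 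Hence $\mathcal{A}[f] = \mathcal{A}[T\circ f]$ for every such $T$. This means $\mathfrak{G}_A$ is really indexed by suitable maps modulo the affine group, and I would count these equivalence classes.

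Next, to compare $\mathfrak{G}_A$ with $\mathfrak{G}_{A'}$, I would build an explicit bijection. Fix any bijection $\sigma\colon \Omega \to \Omega$ with $\sigma[A'] = A$ (possible since $|A| = |A'| = a$). Then $f \mapsto f \circ \sigma$ sends suitable maps for $A$ to suitable maps for $A'$: if $f[A]$ is an $\alpha$-plane then $(f\circ\sigma)[A'] = f[\sigma[A']] = f[A]$ is an $\alpha$-plane. Moreover, precomposition by $\sigma$ permutes the elements $X \subseteq \Omega$, so $X \in \mathcal{A}[f\circ\sigma]$ iff $(f\circ\sigma)[X] = f[\sigma[X]]$ is an $\alpha$-plane iff $\sigma[X]\in\mathcal{A}[f]$; thus $\mathcal{A}[f\circ\sigma] = \sigma^{-1}[\mathcal{A}[f]]$ (taking the image setwise). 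The map $\mathcal{A} \mapsto \sigma^{-1}[\mathcal{A}]$ is a bijection on announcements, and it carries $\mathfrak{G}_A$ onto $\mathfrak{G}_{A'}$, giving the desired equality of cardinalities.

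The main obstacle, and the point requiring the most care, is the passage from maps $f$ to announcements $\mathcal{A}[f]$: distinct suitable maps can yield identical announcements, so $|\mathfrak{G}_A|$ is \emph{not} the number of suitable maps but the number of distinct announcements they produce. I must therefore verify that the correspondence $f \mapsto f\circ\sigma$ descends to a well-defined bijection at the level of announcements, i.e.\ that it respects the fibers of $f \mapsto \mathcal{A}[f]$. The identity $\mathcal{A}[f\circ\sigma] = \sigma^{-1}[\mathcal{A}[f]]$ handles exactly this: since $\sigma^{-1}[\cdot]$ is an involutive relabeling that is injective on the whole power set $2^{\binom{\Omega}{a}}$, it maps the set of distinct announcements arising from hand $A$ injectively onto those arising from hand $A'$, and symmetry (using $\sigma^{-1}$) gives surjectivity. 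I would present this as the core lemma and keep the affine-group remark only as the conceptual motivation, since the relabeling bijection alone suffices for the equitability claim.
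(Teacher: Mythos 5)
Your proposal is correct and takes essentially the same route as the paper: both arguments fix a permutation $\sigma$ of $\Omega$ carrying one hand onto the other and use the induced setwise relabeling of announcements as the bijection between $\mathfrak{G}_A$ and $\mathfrak{G}_{A'}$. The only difference is that your identity $\mathcal{A}[f\circ\sigma]=\sigma^{-1}[\mathcal{A}[f]]$ spells out the well-definedness of this relabeling at the level of announcements, a step the paper leaves as ``easy to see.''
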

\begin{proof}
Let $A, A'\in {\Omega \choose a}$. To show that $|\mathfrak{G}_A|=|\mathfrak{G}_{A'}|$, we will define a bijection $\Sigma\colon \mathfrak{G}_A\to \mathfrak{G}_{A'}$. As a first step, we will build a function $\sigma$ that permutes the elements of $\Omega$. We know that $|A\setminus A'|=|A'\setminus A|$, so we can find a bijection $s\colon A\setminus A' \rightarrow A'\setminus A.$

Using the function $s$ we will define a permutation $\sigma\colon \Omega \rightarrow \Omega$ given by
$$	\sigma(x)= 
	\left\{
		\begin{array}{ll}
			s(x) & \mbox{if } x\in A\setminus A' \\
			s^{-1}(x) & \mbox{if } x\in A'\setminus A \\
			x & \mbox{otherwise.} 
		\end{array}
	\right.$$
Since $s$ is invertible, $\sigma$ is well-defined, and it is easy to check that $\sigma$ is bijective. 
We then define $\Sigma: \mathfrak{G}_A \rightarrow \mathfrak{G}_{A'}$ given by $\Sigma(\mathcal{A})= \{ \sigma [H] | H \in \mathcal{A} \}.$ Then, it is easy to see that $\Sigma$ has an inverse given by $\Sigma^{-1}(\mathcal B)=\{\sigma^{-1}[H]:H\in \mathcal B\},$ and hence $\Sigma$ is a bijection, so that $|\mathfrak G_{A}|=|\mathfrak G_{A'}|$ as claimed.
\end{proof}

We have now proven that for every hand that Alice can have there is the same number $\ParEqui$ of possible announcements. This permits us to set the probability of a particular announcement to be chosen to $1/\ParEqui$, and thus the geometric strategy is equitable. As mentioned above, this will simplify some computations, even without explicitly computing $\ParEqui$.

In the remainder of this section we will prove that the geometric strategy gives an informative and weakly safe solution to the Russian cards problem, provided $c$ satisfies certain bounds.

\begin{lema}
\label{CotaInfo}
Let $q$ be a prime power, $1\leq \alpha <\delta$ and $a,b,c$ positive integers such that $a=q^\alpha$, $a+b+c=q^\delta$ and $c < q^{\alpha}-q^{\alpha-1}$. Then, the geometric strategy with parameters $q,\delta,\alpha$ is informative for $(a,b,c)$.
\end{lema}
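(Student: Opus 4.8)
The plan is to unfold the definition of informativeness one announcement at a time and reduce it to a purely geometric statement about $\alpha$-planes. Fix $A\in\binom{\Omega}{a}$ and a suitable map $f$ for $A$, which I regard as a bijection $\Omega\to\mathbb F^\delta_q$ (since $|\Omega|=q^\delta=|\mathbb F^\delta_q|$ and distinct cards occupy distinct points, so the intended reading of a suitable map is an injection, hence a bijection). Fix $B\in\binom{\Omega\setminus A}{b}$ and set $C=\Omega\setminus(A\cup B)$, so that $(A,B,C)$ is a deal with $|C|=c$. Because $f[A]$ is an $\alpha$-plane we have $A\in\mathcal A[f]$, and since $A\cap B=\varnothing$ it follows that $A\in\mathcal A[f]\setminus B$. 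The whole content of the lemma is thus that no competing hand survives: I must show that any $X\in\mathcal A[f]$ with $X\cap B=\varnothing$ equals $A$.

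First I would translate the hypotheses on $X$ across the bijection $f$. Write $P=f[A]$ and $\Pi=f[X]$; both are $\alpha$-planes, and by Lemma~\ref{puntosalpha} each has exactly $q^\alpha$ points (in particular every element of $\mathcal A[f]$ is a genuine $a$-set, since $a=q^\alpha$). The condition $X\cap B=\varnothing$ is equivalent to $X\subseteq A\cup C$, and by injectivity of $f$ this becomes $\Pi\subseteq P\cup f[C]$, where $|f[C]|=c$. So the claim reduces to the geometric statement that the only $\alpha$-plane contained in $P\cup f[C]$ is $P$ itself.

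For this, suppose toward a contradiction that $\Pi\neq P$. Then $\Pi$ and $P$ are distinct $\alpha$-planes, so the first part of Lemma~\ref{Inter} gives $|\Pi\cap P|\leq q^{\alpha-1}$. Consequently $\Pi$ has at least $q^\alpha-q^{\alpha-1}$ points lying outside $P$, and since $\Pi\subseteq P\cup f[C]$, every such point belongs to $f[C]$. This forces $c=|f[C]|\geq q^\alpha-q^{\alpha-1}$, contradicting the hypothesis $c<q^\alpha-q^{\alpha-1}$. Hence $\Pi=P$, and applying $f^{-1}$ yields $X=A$, which is exactly what was required.

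I do not expect a serious obstacle: the argument is essentially a single application of the intersection bound, with the hypothesis $c<q^\alpha-q^{\alpha-1}$ functioning precisely as the threshold below which a rival $\alpha$-plane would have to consume more of Cath's cards than she holds. The one point demanding care is the translation step, where bijectivity of $f$ is used twice: once to ensure each member of $\mathcal A[f]$ really carries the full $q^\alpha$ points of an $\alpha$-plane, and once to pass cleanly between the set containment $X\subseteq A\cup C$ in $\Omega$ and the geometric containment $\Pi\subseteq P\cup f[C]$ in $\mathbb F^\delta_q$.
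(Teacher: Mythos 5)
Your proof is correct and takes essentially the same route as the paper's: both reduce informativeness to the impossibility of a second $\alpha$-plane avoiding $f[B]$, and both reach the contradiction $c\geq q^\alpha-q^{\alpha-1}$ via Lemma~\ref{Inter}. The only cosmetic difference is that you apply the intersection bound (part~\ref{InterCap}) to count points of the rival plane forced into $f[C]$, while the paper applies the union bound (part~\ref{InterCup}) and compares $|U\cup U'|$ with $a+c$ --- the two are equivalent by inclusion--exclusion.
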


\begin{proof}
Let $q,\delta,\alpha,a,b,c$ satisfy the hypotheses of the lemma. Let $A\in{\Omega\choose a}$ and $B\in{{\Omega\setminus A}\choose b}$ and $f\colon \Omega\to \mathbb F^\delta_q$ be such that $f[A]$ is an $\alpha$-space. Clearly $A\in \mathfrak G_A$, so it remains to check that if $A'\in{{\Omega\setminus B} \choose a}$ is such that $\mathcal A[f]\in\mathfrak G_{A'}$, then $A=A'$.

If this were not the case, then $U=f[A]$ would be an $\alpha$-space different from $U'=f[A']$. Since both $U$ and $U'$ are disjoint from $f[B]$, then so is $U\cup U'$. By Lemma \ref{Inter}.\ref{InterCup}, $|U\cup U'|\geq 2q^{\alpha}-q^{\alpha-1}$. But $f$ is a bijection, so it follows that $a+c=|A|+|C|\geq 2q^{\alpha}-q^{\alpha-1}$, and thus $c\geq q^\alpha-q^{\alpha-1}$, contradicting our hypothesis.

We conclude that $U=U'$, so that also $A=A'$ and thus the geometric strategy is informative.\end{proof}

Next we must see that, given a card $x$ not held by Cath, there is a nonzero probability that Alice holds $x$, which means that it is impossible that there is $x\in\Omega\setminus C$ such that all $\alpha$-spaces passing through $x$ meet $C$.

\begin{lema}
\label{cotaAC2}
Let $q$ be a prime power, $0<\alpha\leq \delta$ and $a,b,c$ be such that $a=q^\alpha$ and $a+b+c=q^\delta$. Let $C\in{\Omega \choose c}$, and $f\colon \Omega\to \mathbb F^{\delta}_q$ be a bijection. If $c< \frac{q^\delta-1}{q^\alpha-1}$ and $x\in \Omega\setminus C$, there is $A\in \mathcal A[f]$ such that $x\in A$.
\end{lema}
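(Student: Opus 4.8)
The plan is to recast the desired existence of a hand as a counting statement about $\alpha$-planes in $\mathbb F^\delta_q$ and then apply a union bound. Since $f$ is a bijection, a hand $A\in\mathcal A[f]$ with $x\in A$ and $A\cap C=\varnothing$ corresponds exactly to an $\alpha$-plane $U$ in $\mathbb F^\delta_q$ with $f(x)\in U$ and $U\cap f[C]=\varnothing$: given such a $U$, set $A=f^{-1}[U]$, so that $f[A]=U$ is an $\alpha$-plane (whence $A\in\mathcal A[f]$), $x\in A$, and $A$ avoids $C$. This avoidance of $C$ is the substantive content, since it is what makes Alice's holding $x$ consistent with Cath's hand, and it is what the hypothesis on $c$ is for. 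I would also record at the outset that $x\notin C$ together with injectivity of $f$ gives $f(x)\notin f[C]$ and $|f[C]|=c$.

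First I would count the $\alpha$-planes through $f(x)$: by Lemma \ref{kalpha}.\ref{kaplhaOne} there are exactly $\kapa q\delta\alpha$ of them. Next, for each of the $c$ points $p\in f[C]$, since $p\neq f(x)$, Lemma \ref{kalpha}.\ref{kalphaTwo} gives that the number of $\alpha$-planes passing through both $f(x)$ and $p$ is $\kkapa q\delta\alpha=\left(\frac{q^\alpha-1}{q^\delta-1}\right)\kapa q\delta\alpha$.

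The key step is then a union bound: any $\alpha$-plane through $f(x)$ that meets $f[C]$ must pass through at least one point $p\in f[C]$, so the number of such planes is at most $c\,\kkapa q\delta\alpha$. Hence the number of $\alpha$-planes through $f(x)$ that avoid $f[C]$ is at least
\[\kapa q\delta\alpha-c\,\kkapa q\delta\alpha=\kapa q\delta\alpha\left(1-c\cdot\frac{q^\alpha-1}{q^\delta-1}\right),\]
which is strictly positive precisely when $c<\frac{q^\delta-1}{q^\alpha-1}$ — exactly the hypothesis of the lemma. Any plane $U$ witnessing this positive count yields the desired hand $A=f^{-1}[U]$.

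The only real subtlety is the union bound, which over-counts planes meeting $f[C]$ in two or more points; but since we want only a lower bound on the number of avoiding planes (equivalently, an upper bound on the meeting planes), subadditivity suffices and no inclusion–exclusion is required. The two bookkeeping points I would be careful about are that $f(x)\notin f[C]$, so that the two-point count of Lemma \ref{kalpha}.\ref{kalphaTwo} — which presumes distinct points — genuinely applies to each $p$, and that the computed bound is sharp enough that the stated inequality on $c$ is exactly the threshold for positivity.
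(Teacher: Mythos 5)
Your proposal is correct and takes essentially the same route as the paper's own proof: count the $\kapa q\delta\alpha$ planes through $f(x)$, bound the ones meeting $f[C]$ by $c\,\kkapa q\delta\alpha$ via a union bound, and note that positivity of the difference is exactly the hypothesis $c<\frac{q^\delta-1}{q^\alpha-1}$, after which $A=f^{-1}[U]$ finishes the argument. Your extra bookkeeping (that $f(x)\notin f[C]$ so the two-point count applies, and that subadditivity suffices without inclusion--exclusion) only makes explicit what the paper leaves implicit.
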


\begin{proof}
Let $x\in \Omega \setminus C$. Recall that $\kapa q\delta\alpha$ is the number of $\alpha$-planes meeting $f(x)$. If we take $z\in C$, there are $\kkapa q\delta\alpha$ $\alpha$-planes touching both $x$ and $z$ and, since there are $c$ points in $C$, there are at most $c\kkapa q\delta\alpha$ $\alpha$-planes touching $f(x)$ and meeting $f[C]$. Thus in order to guarantee that there is at least one $\alpha$-plane touching $f(x)$ but not $f[C]$, it suffices to have $c \kkapa q\delta\alpha< \kapa q\delta\alpha.$ Solving for $c$ and using Lemma \ref{kalpha}.\ref{kalphaTwo}, this becomes
\[c < \frac{\kapa q\delta\alpha}{\kkapa q\delta\alpha}=\frac{q^\delta-1}{q^\alpha-1}.\]
Thus if $c$ satisfies this constraint, there is an $\alpha$-space $U$ touching $f(x)$ but not meeting $f[C]$ and $A=f^{-1}[U]$ is an element of $\mathcal A[f]$ containing $x$ but disjoint from $C$, as desired.
\end{proof}

Likewise, there should be a nonzero probability that any card not held by Cath is held by Bob. In other words, if $y$ is not held by Cath, there should be an $\alpha$-plane avoiding $y$ {\em and} Cath's hand.

\begin{lema}
\label{AC3}
Let $q$ be a prime power, $0<\alpha\leq \delta$ and $a,b,c$ be such that $a=q^\alpha$ and $a+b+c=q^\delta$. Let $C\in{\Omega\choose c}$ and $f\colon \Omega\to \mathbb F^{\delta}_q$ be a bijection. If $c< q^{\delta-\alpha}$ and $y\in \Omega\setminus C$, there is $A\in \mathcal A[f]$ such that $y\not\in A$ and $A\cap C=\varnothing$.
\end{lema}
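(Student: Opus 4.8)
The plan is to restate the claim as a pure statement about finite geometry and then produce the required $\alpha$-plane by a partition argument. Write $S=\{f(y)\}\cup f[C]$; since $f$ is a bijection and $y\notin C$, the set $S$ consists of exactly $c+1$ points of $\mathbb F^\delta_q$. It then suffices to find a single $\alpha$-plane $U\subseteq\mathbb F^\delta_q$ with $U\cap S=\varnothing$, for then $A=f^{-1}[U]$ satisfies $f[A]=U$ (so $A\in\mathcal A[f]$), while $y\notin A$ and $A\cap C=\varnothing$, exactly as desired.

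The main idea is to work inside a single parallel class of $\alpha$-planes. Recall from the proof of Lemma \ref{kalpha}.\ref{kalphaThree} that, for any fixed $\alpha$-space $W$, the cosets of $W$ partition $\mathbb F^\delta_q$ into exactly $q^{\delta-\alpha}$ parallel $\alpha$-planes, and that two points lie in the same coset precisely when their difference lies in $W$. Assuming first that $c\geq 1$, so that $|S|=c+1\geq 2$, choose two distinct points $p,p'\in S$ and set $v=p-p'\neq 0$. Since $\alpha\geq 1$, I may extend $v$ to a basis of $\mathbb F^\delta_q$ and let $W$ be the $\alpha$-space spanned by $v$ together with the next $\alpha-1$ basis vectors, so that $v\in W$. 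Then $p$ and $p'$ fall into the same member of the parallel class of $W$, whence the $c+1$ points of $S$ occupy at most $c$ of the $q^{\delta-\alpha}$ planes in that class. As the hypothesis gives $c<q^{\delta-\alpha}$, at least one plane $U$ of the class contains no point of $S$, and this is the plane we want.

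It remains to dispose of the degenerate case $c=0$, where $S=\{f(y)\}$ is a single point; here any parallel class has $q^{\delta-\alpha}>1$ members partitioning the space (which holds whenever $\alpha<\delta$, as in the geometric strategy), only one of which meets $f(y)$, so any other plane works. I expect the main subtlety to be exactly the boundary nature of the bound $c<q^{\delta-\alpha}$: a naive union bound over the $c+1$ points of $S$ counts at most $(c+1)\kapa q\delta\alpha$ planes meeting $S$, against a total of $q^{\delta-\alpha}\kapa q\delta\alpha$ planes by Lemma \ref{kalpha}, and this only yields ``$\leq$'' rather than ``$<$'', so it fails to guarantee an avoiding plane. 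The collapse of two points of $S$ into a common plane is precisely what recovers the needed strict inequality; equivalently, one could run the union bound and invoke $\kkapa q\delta\alpha>0$ (Lemma \ref{kalpha}.\ref{kalphaTwo}) to see the overcounting is nonzero, but the parallel-class argument makes this gain transparent.
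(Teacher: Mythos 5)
Your proof is correct and takes essentially the same route as the paper's: both arguments pick a parallel class of $\alpha$-planes in which two of the $c+1$ forbidden points (namely $f(y)$ and the points of $f[C]$) collapse into a single member, so that the class's $q^{\delta-\alpha}$ planes cannot all be met when $c<q^{\delta-\alpha}$, and the avoiding plane is pulled back through $f^{-1}$. The only differences are cosmetic: the paper forces the collapsing pair to be $f(y)$ and some $f(z)$ with $z\in C$ (implicitly assuming $c\geq 1$), whereas you allow an arbitrary pair from $S$ and dispose of the $c=0$ case separately, correctly noting that this degenerate case needs $\alpha<\delta$.
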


\begin{proof}
Let $y\in \Omega\setminus C$ and $z\in C$ be arbitrary and $V$ be an $\alpha$-plane touching both $f(y)$ and $f(z)$. We know that there are $q^{\delta-\alpha}$ $\alpha$-planes parallel or equal to $V$. Thus if $c< q^{\delta-\alpha}$ there is at least one $\alpha$-plane $U$ parallel to $V$ which does not contain any pont from $C$. But by construction, $y$ is not on $U$ either. Thus there is an $\alpha$-plane avoiding both $y$ and $U$, and we may take $A=f^{-1}[U]$.
\end{proof}

\begin{lema}
\label{LemCotaSeg}
Let $q$ be a prime power, $0<\alpha < \delta$ and $a,b,c$ be such that $a=q^\alpha$ and $a+b+c=q^\delta$. Then, the geometric strategy with parameters $q,\alpha,\delta$ is weakly secure for $(a,b,c)$ whenever $c < q^{\delta-\alpha}$.
\end{lema}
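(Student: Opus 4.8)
The plan is to unpack the definition of weak security through the counting lemma for equitable strategies, reducing it to two existence statements that have already been established. Fix a hand $C\in\binom\Omega c$, a card $x\in\Omega\setminus C$, and an announcement $\mathcal A$ with $P(\mathcal A\mid C)\neq 0$. Since the geometric strategy only ever issues announcements of the form $\mathcal A[f]$, and $\mathcal A[f]\in\mathfrak G_A$ holds exactly when $A\in\mathcal A[f]$, the condition $P(\mathcal A\mid C)\neq 0$ forces at least one hand $A\in\mathcal A$ disjoint from $C$; thus we may write $\mathcal A=\mathcal A[f]$ for a bijection $f\colon\Omega\to\mathbb F^\delta_q$ and record $|\mathcal A\setminus C|\geq 1$. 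The counting lemma then gives $P(x\in A\mid C,\mathcal A)=\nicefrac{|\mathcal A_x\setminus C|}{|\mathcal A\setminus C|}$, so, writing the target condition in the form that Cath can rule out neither $x\in A$ nor $x\in B$, establishing weak security amounts to showing $0<|\mathcal A_x\setminus C|<|\mathcal A\setminus C|$: some hand in the announcement contains $x$ while avoiding $C$, and some hand avoids both $x$ and $C$.

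The two inequalities are supplied directly by the preceding existence lemmas. For the lower bound, Lemma \ref{cotaAC2} produces a hand $A\in\mathcal A[f]$ with $x\in A$ and $A\cap C=\varnothing$; such an $A$ witnesses $A\in\mathcal A_x\setminus C$, so $|\mathcal A_x\setminus C|\geq 1$ and $P(x\in A\mid C,\mathcal A)>0$, meaning Cath cannot conclude that $x$ belongs to Bob. For the strict upper bound, Lemma \ref{AC3} (applied with $y=x$) produces a hand $A'\in\mathcal A[f]$ with $x\notin A'$ and $A'\cap C=\varnothing$; this $A'$ lies in $\mathcal A\setminus C$ but not in $\mathcal A_x\setminus C$, whence $\mathcal A_x\setminus C$ is a proper subset of $\mathcal A\setminus C$ and $|\mathcal A_x\setminus C|<|\mathcal A\setminus C|$, giving $P(x\in A\mid C,\mathcal A)<1$ and preventing Cath from concluding that $x$ belongs to Alice. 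Together these yield $0<P(x\in A\mid C,\mathcal A)<1$, which is precisely the weak-security requirement, the prior $P(x\in A\mid C)=\nicefrac a{a+b}$ being a fixed constant strictly between $0$ and $1$.

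The one point demanding care is checking that the single hypothesis $c<q^{\delta-\alpha}$ of this lemma is strong enough to invoke \emph{both} earlier lemmas, whose stated constraints differ: Lemma \ref{AC3} needs precisely $c<q^{\delta-\alpha}$, whereas Lemma \ref{cotaAC2} is phrased under $c<\nicefrac{(q^\delta-1)}{(q^\alpha-1)}$. I would dispatch this with the short computation $\frac{q^\delta-1}{q^\alpha-1}-q^{\delta-\alpha}=\frac{q^{\delta-\alpha}-1}{q^\alpha-1}>0$, valid since $\alpha<\delta$ forces $q^{\delta-\alpha}>1$; hence $c<q^{\delta-\alpha}<\nicefrac{(q^\delta-1)}{(q^\alpha-1)}$ and the hypothesis of Lemma \ref{cotaAC2} is met. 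Beyond this inequality the argument is essentially bookkeeping: confirming that every positive-probability announcement is genuinely of the form $\mathcal A[f]$ for a bijection (so the geometric lemmas apply verbatim) and that the hand produced by each existence lemma lands in the count it is meant to bound. No delicate estimate is required, since weak security asks only for strict membership of the posterior in the open interval $(0,1)$ rather than any quantitative control.
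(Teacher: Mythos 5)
Your proof is correct and follows essentially the same route as the paper, which simply observes that $q^{\delta-\alpha}<\frac{q^\delta-1}{q^\alpha-1}$ and cites Lemmas \ref{cotaAC2} and \ref{AC3}. You have merely made explicit the bookkeeping the paper leaves implicit — the reduction via the counting lemma to $0<|\mathcal A_x\setminus C|<|\mathcal A\setminus C|$ and the verification of the inequality between the two bounds on $c$ — both of which are handled correctly.
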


\begin{proof}
It is straightforward to check that $q^{\delta-\alpha}<\frac{q^\delta - 1}{q^\alpha-1}$, so this is a direct consequence of Lemmas \ref{cotaAC2} and \ref{AC3}.
\end{proof}

Putting together Lemmas \ref{CotaInfo} and \ref{LemCotaSeg} we obtain the main result of this section.

\begin{teorema}
\label{protgeom}
Let $q$ be a prime power, $0<\alpha < \delta$ and $a,b,c$ be such that $a=q^\alpha$ and $a+b+c=q^\delta$. Then, the geometric strategy with parameters $q,\alpha,\delta$ is weakly safe and informative for $(a,b,c)$ whenever $c<\min(q^{\alpha}-q^{\alpha-1},q^{\delta-\alpha})$.
\end{teorema}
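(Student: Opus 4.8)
The plan is simply to combine the two lemmas immediately preceding the statement, since between them they supply exactly the two properties the theorem asserts. First I would unpack the combined hypothesis: assuming $c < \min(q^\alpha - q^{\alpha-1}, q^{\delta-\alpha})$ is by definition equivalent to simultaneously having $c < q^\alpha - q^{\alpha-1}$ and $c < q^{\delta-\alpha}$. This splitting is the only genuine step, and it is immediate from the meaning of the minimum.

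Next I would invoke each lemma against its matching inequality. The bound $c < q^\alpha - q^{\alpha-1}$ is precisely the hypothesis of Lemma \ref{CotaInfo}, whose conclusion is that the geometric strategy with parameters $q,\delta,\alpha$ is informative for $(a,b,c)$. The bound $c < q^{\delta-\alpha}$ is precisely the hypothesis of Lemma \ref{LemCotaSeg}, whose conclusion is that the same strategy is weakly secure for $(a,b,c)$. Since the theorem asks for a strategy that is both informative and weakly safe (reading \emph{weakly safe} as \emph{weakly secure}), the two invocations together give exactly the claimed conclusion, and the proof closes at once.

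I do not expect any real obstacle here, as all the substantive content lives in the component lemmas. The one point deserving a moment of care is checking that the theorem's ambient hypotheses $0 < \alpha < \delta$, $a = q^\alpha$, and $a+b+c = q^\delta$ are compatible with those of both lemmas. Lemma \ref{CotaInfo} is stated for $1 \le \alpha < \delta$ and Lemma \ref{LemCotaSeg} for $0 < \alpha < \delta$, but over the integers these coincide with the theorem's range, so both lemmas apply without further adjustment. Having confirmed this, no calculation remains to be done.

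\begin{proof}
Assume $c < \min(q^\alpha - q^{\alpha-1}, q^{\delta-\alpha})$, so that both $c < q^\alpha - q^{\alpha-1}$ and $c < q^{\delta-\alpha}$ hold. By Lemma \ref{CotaInfo}, the first inequality guarantees that the geometric strategy with parameters $q,\delta,\alpha$ is informative for $(a,b,c)$. By Lemma \ref{LemCotaSeg}, the second inequality guarantees that the same strategy is weakly secure for $(a,b,c)$. Hence the geometric strategy is both informative and weakly safe for $(a,b,c)$, as claimed.
\end{proof}
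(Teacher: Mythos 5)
Your proof is correct and matches the paper exactly: the paper likewise obtains Theorem \ref{protgeom} by simply combining Lemma \ref{CotaInfo} (informativity from $c<q^{\alpha}-q^{\alpha-1}$) with Lemma \ref{LemCotaSeg} (weak security from $c<q^{\delta-\alpha}$). Your additional check that the lemmas' hypotheses align with the theorem's is a sensible touch, though the paper treats it as immediate.
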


We may use Theorem \ref{protgeom} to find many tuples $(a,b,c)$ for which the geometric strategy is weakly secure. If Alice holds a line in the plane, then we may take $c$ to be almost as large as $a$:

\begin{corolario}\label{CorWeakBigC}
There are infinitely many values of $a$ such that for any $c\leq a-2$ there is $b<a^2$ such that the geometric strategy is informative and weakly safe for $(a,b,c)$.
\end{corolario}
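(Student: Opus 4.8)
The plan is to specialize Theorem \ref{protgeom} to the regime in which Alice's hand is a line in the affine plane, i.e. to take $\alpha = 1$ and $\delta = 2$. First I would let $q$ range over prime powers with $q \geq 3$ and set $a = q = q^\alpha$. For each integer $c$ with $1 \leq c \leq a-2$, I would then define $b = q^\delta - a - c = a^2 - a - c$, so that by construction $a = q^\alpha$ and $a + b + c = q^2 = q^\delta$; these are exactly the parameter constraints demanded by the geometric strategy $\mathfrak G(q,2,1)$.

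Next I would verify the security bound of Theorem \ref{protgeom}. With $\alpha=1$ and $\delta=2$ we have $q^\alpha - q^{\alpha-1} = q-1 = a-1$ and $q^{\delta-\alpha} = q = a$, so that $\min(q^\alpha - q^{\alpha-1},\, q^{\delta-\alpha}) = a-1$. Since $c \leq a-2 < a-1$, the hypothesis $c < \min(q^\alpha - q^{\alpha-1},\, q^{\delta-\alpha})$ holds, and Theorem \ref{protgeom} immediately yields that $\mathfrak G(q,2,1)$ is informative and weakly safe for $(a,b,c)$.

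It then remains to check that $b$ meets the stated bound $b < a^2$ and is a legitimate positive hand size. From $b = a^2 - a - c$ together with $a,c \geq 1$ we get $b < a^2$ at once, while $c \leq a-2$ gives $b \geq a^2 - 2a + 2 = (a-1)^2 + 1 > 0$, so $b$ is indeed a valid number of cards for Bob. Finally, since there are infinitely many prime powers $q \geq 3$, there are infinitely many admissible values of $a = q$, which is the desired conclusion.

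The main (and essentially only) obstacle here is bookkeeping: choosing the specialization $\alpha=1$, $\delta=2$ so that the two competing bounds in Theorem \ref{protgeom} collapse to precisely $c \leq a-2$, and confirming that the forced value $b = a^2 - a - c$ lands strictly below $a^2$ while staying positive. There is no new mathematical content beyond the theorem already in hand; the point is the observation that the ``line in the plane'' regime is exactly the one that maximizes the tolerable size of Cath's hand relative to $a$.
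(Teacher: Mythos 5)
Your proposal is correct and follows exactly the paper's own route: the paper's proof is the one-line instruction ``take $\alpha=1$, $\delta=2$, $q$ an arbitrary prime power, and apply Theorem \ref{protgeom},'' and your argument simply spells out the bookkeeping (that $\min(q-1,q)=a-1$, that $c\leq a-2$ meets the strict bound, and that $b=a^2-a-c$ is positive and below $a^2$) which the paper leaves implicit.
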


\proof
Take $\alpha=1$, $\delta=2$ and $q$ an arbitrary prime power and apply Theorem \ref{protgeom}.
\endproof

On the other hand, if $c$ is much smaller, then we can give Alice a higher-dimensional plane to ensure that the number of cards is not too large relative to Alice and Cath's hands.

\begin{corolario}\label{CorWeakSmallB}
Given rational $\rho\in(0,1)$, there are infinitely many values of $a$ such that for any $c<a^{\rho}$ there is $b<a^{1+\rho}$ such that the geometric strategy is informative and weakly safe for $(a,b,c)$.
\end{corolario}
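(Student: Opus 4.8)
The plan is to reduce the corollary to a single application of Theorem~\ref{protgeom} by choosing the parameters $q,\alpha,\delta$ so that the bound $\min(q^{\alpha}-q^{\alpha-1},q^{\delta-\alpha})$ coincides with $a^\rho$. I would write $\rho=m/n$ in lowest terms, so that $0<m<n$, fix any prime power $q$ (for instance $q=2$), and for each positive integer $k$ set $\alpha=kn$ and $\delta=k(n+m)$. Then $a=q^\alpha=q^{kn}$, which yields infinitely many distinct values of $a$ as $k$ grows, and since $m>0$ we have $0<\alpha<\delta$, so these are admissible parameters for the geometric strategy.

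The exponents are arranged precisely so that $\delta-\alpha=km=\alpha\rho$, whence $q^{\delta-\alpha}=q^{\alpha\rho}=a^\rho$ and $q^\delta=q^{\alpha(1+\rho)}=a^{1+\rho}$. First I would check the other term of the minimum, namely that $a^\rho\le q^{\alpha}-q^{\alpha-1}=q^{\alpha-1}(q-1)$. Since $q\ge2$ it suffices to have $\alpha\rho\le\alpha-1$, i.e.\ $\alpha(1-\rho)\ge1$; with $\alpha=kn$ this reads $k(n-m)\ge1$, which holds for every $k\ge1$ because $n>m$. Consequently $\min(q^{\alpha}-q^{\alpha-1},q^{\delta-\alpha})=a^\rho$, so every integer $c<a^\rho$ automatically satisfies $c<\min(q^{\alpha}-q^{\alpha-1},q^{\delta-\alpha})$, which is exactly the hypothesis of Theorem~\ref{protgeom}.

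Given such a $c$, I would set $b=q^\delta-a-c$ and verify that it is a legitimate hand size lying below $a^{1+\rho}$. The upper bound is immediate, since $b<q^\delta=a^{1+\rho}$ because $a>0$. Positivity follows from $a+c<q^{kn}+q^{km}\le 2q^{kn}\le q^{kn}q^{km}=q^\delta$, using $q^{km}\ge2$ and $m<n$; hence $b\ge1$ and $a+b+c=q^\delta$ as required. Applying Theorem~\ref{protgeom} with these parameters then gives that the geometric strategy is informative and weakly safe for $(a,b,c)$, completing the argument.

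The only genuinely delicate point is the choice of exponents: one must take $\alpha$ to be a multiple of the denominator $n$ so that both $\alpha(1+\rho)$ and $\alpha\rho$ are integers, making $\delta$ and $\delta-\alpha$ well defined as dimensions, and one must simultaneously ensure that $\alpha$ is large enough that the term $q^\alpha-q^{\alpha-1}$ does not drop below $a^\rho$. Once the identities $\delta-\alpha=\alpha\rho$ and $\delta=\alpha(1+\rho)$ are secured, every remaining inequality is routine bookkeeping.
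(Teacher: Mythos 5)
Your argument is correct, and its core is the same as the paper's: both proofs reduce the corollary to Theorem~\ref{protgeom} by choosing the dimensions so that $\nicefrac{\delta}{\alpha}=1+\rho$, which forces the exact identities $q^{\delta-\alpha}=a^{\rho}$ and $q^{\delta}=a^{1+\rho}$. The genuine difference is which parameter is sent to infinity. The paper fixes one pair $(\alpha,\delta)$ with $1+\rho=\nicefrac{\delta}{\alpha}$ and lets the field size $q$ grow, using largeness of $q$ to absorb the informativity constraint $q^{\rho\alpha}<q^{\alpha}-q^{\alpha-1}$; you instead fix the field (even $q=2$) and scale the dimensions, $\alpha=kn$, $\delta=k(n+m)$, so that the constraint becomes $\alpha(1-\rho)=k(n-m)\geq 1$, automatic for every $k\geq 1$. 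Each choice buys something: the paper keeps the ambient dimension minimal but needs arbitrarily large prime powers, while your version produces a single concrete family $a=2^{kn}$ over the smallest possible field, at the cost of working in ever higher-dimensional spaces (hence combinatorially larger announcements). Your write-up is also somewhat more scrupulous than the paper's: you verify explicitly that $b=q^{\delta}-a-c$ is positive and that $c<a^{\rho}$ really does place $c$ below the minimum $\min(q^{\alpha}-q^{\alpha-1},q^{\delta-\alpha})$, two points the paper leaves implicit.
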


\proof
Since $\rho$ is rational, so is $1+\rho$, so we can find $1\leq\alpha<\delta$ such that $1+\rho=\nicefrac \delta\alpha$. Since $\rho<1$, for large enough $q$ we have that $q^{\rho\alpha}<q^{\alpha}-q^{\alpha-1}$. Thus for such a $q$ we may use Theorem \ref{protgeom} to see that, for $a=q^\alpha$, $c<q^{\delta-\alpha}$ and $b=q^\delta-a-c$, the geometric strategy is informative and weakly safe. Moreover, we have that $b<q^{\delta}=q^{(1+\rho)\alpha}=a^{(1+\rho)}$, whereas $c<q^{\delta-\alpha}=q^{\rho\alpha}=\alpha^{\rho}$ was arbitrary, so all desired conditions are met.
\endproof

Observe that in either case, the geometric strategy gives infinitely many solutions for tuples $(a,b,c)$ with $c<a$ and $b< ac$.

\section{Strong safety of the geometric strategy}

Since the geometric strategy is equitable, we may apply the results in the previous section to it in order to find parameters for which this strategy is $\varepsilon$-strongly safe. As we have seen, this strategy is weakly safe if $c<q^\alpha-q^{\alpha-1}$ and $c<q^{\delta-\alpha}$. Our goal will be to find tuples for which it is $\varepsilon$-strongly safe for a given $\varepsilon$.

\subsection{Some auxiliary estimates}

We will need to find bounds on the number of hands that Cath considers possible. We begin by counting the total number of hands in an announcement. The following is a direct consequence of Lemma \ref{kalpha}.

\begin{lema}\label{LemSizeAnn}
The number of $a$-sets in an announcement $\cal A$ of the geometric strategy with parameters $q,\delta,\alpha$ is $q^{\delta-\alpha} \: \kapa q\delta\alpha $.
\end{lema}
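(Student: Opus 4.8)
The plan is to exploit the fact that an announcement of the geometric strategy is nothing more than a relabelling, via the suitable map $f$, of the collection of all $\alpha$-planes of $\mathbb F^\delta_q$, and then to read off the count directly from Lemma \ref{kalpha}.\ref{kalphaThree}. Concretely, I would start by fixing the suitable map $f\colon\Omega\to\mathbb F^\delta_q$ that defines the announcement, so that $\mathcal A=\mathcal A[f]=\{X\subset\Omega : f[X]\text{ is an }\alpha\text{-plane}\}$. Since $a+b+c=q^\delta=|\mathbb F^\delta_q|$ and $f$ is a bijection (as in the hypotheses of Lemmas \ref{cotaAC2} and \ref{AC3}), the assignment $X\mapsto f[X]$ is a bijection from $2^\Omega$ to $2^{\mathbb F^\delta_q}$, with inverse $U\mapsto f^{-1}[U]$.

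The heart of the argument is to show that this bijection restricts to a bijection between $\mathcal A$ and the set of all $\alpha$-planes of $\mathbb F^\delta_q$. This is almost immediate from the definition of $\mathcal A[f]$: an element $X\in\mathcal A$ is exactly a subset whose image $f[X]$ is an $\alpha$-plane, and conversely every $\alpha$-plane $U$ arises as $f[f^{-1}[U]]$ with $f^{-1}[U]\in\mathcal A$. Here I would also remark that every $X\in\mathcal A$ is automatically an $a$-set, and hence really a candidate hand for Alice: because $f$ is injective we have $|X|=|f[X]|$, and $f[X]$, being an $\alpha$-plane, has exactly $q^\alpha=a$ points by Lemma \ref{puntosalpha}. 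Thus no spurious sets of the wrong cardinality are counted, and the $a$-sets of $\mathcal A$ are in one-to-one correspondence with the $\alpha$-planes of $\mathbb F^\delta_q$.

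It then suffices to count the $\alpha$-planes of $\mathbb F^\delta_q$, which is precisely the content of Lemma \ref{kalpha}.\ref{kalphaThree}: their total number is $q^{\delta-\alpha}\,\kapa q\delta\alpha$. Combining this with the bijection established above yields the claimed value for the number of $a$-sets in $\mathcal A$. I do not expect any genuine obstacle in this proof; the only point requiring a moment's care is the verification that elements of $\mathcal A$ have the correct size $a$, which guarantees that counting $a$-sets in the announcement coincides with counting all sets in $\mathcal A[f]$, and this follows cleanly from the injectivity of $f$ together with Lemma \ref{puntosalpha}.
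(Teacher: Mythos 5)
Your proposal is correct and follows the same route as the paper, which presents this lemma as a direct consequence of Lemma \ref{kalpha}.\ref{kalphaThree}; you have simply made explicit the bijection $X\mapsto f[X]$ between announcement elements and $\alpha$-planes (together with the cardinality check via Lemma \ref{puntosalpha}) that the paper leaves implicit.
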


Now let us see how many hands Cath can discard from this announcement. Recall that $\mathcal{A}\setminus C$ denotes the set of lines avoiding $C$ and $\mathcal{A}_x \setminus C$ denotes the set of lines avoiding $C$ that also pass through $x$. We may compute the probability that Alice holds $x$ from Cath's perspective as
\[P(x\in A|C,\mathcal{A})=\frac{|\mathcal{A}_x \setminus C|}{|\mathcal{A} \setminus C|}.\]
What we are interested in is bounding the quotient of Cath's perceived probabilites before and after the announcement, that is,
\begin{equation}\label{EqCociente}
\frac{P(x\in A|C,\mathcal{A})}{P(x\in A|C)}=\frac{\nicefrac{|\mathcal{A}_x\setminus C|}{|\mathcal{A}\setminus C|}}{\nicefrac{a}{a+b}}.
\end{equation}
As we will see, by modifying the parameters, this quotient can become arbitrarily close to $1$.

In order to find bounds for \eqref{EqCociente}, it suffices to bound the numerator, since the denominator is constant. Thus we need to estimate $|\mathcal{A}_x \setminus C|$ and $|\mathcal{A}\setminus C|$. Let us begin with the latter.

\begin{lema}\label{LemAmenosC}
If $\mathcal A$ is an announcement of the geometric strategy with parameters $\alpha,\delta,q$ and $C\in{\Omega\choose c}$ is non-empty, then
\begin{equation}
\label{EqAmenosC}
\kapa q\delta\alpha (q^{\delta-\alpha}-c) \leq |\mathcal{A}\setminus C|\leq\kapa q\delta\alpha (q^{\delta-\alpha}-1).
\end{equation}
Both equalities hold whenever $c=1$.
\end{lema}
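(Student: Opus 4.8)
The plan is to count $\alpha$-planes directly, exploiting the bijection underlying the announcement. Since $|\Omega| = q^\delta = |\mathbb F^\delta_q|$, the suitable map $f$ defining $\mathcal A = \mathcal A[f]$ is a bijection (as in Lemmas \ref{cotaAC2} and \ref{AC3}), so the hands in $\mathcal A$ are in one-to-one correspondence with the $\alpha$-planes of $\mathbb F^\delta_q$, and a hand $X \in \mathcal A$ avoids $C$ precisely when $f[X]$ avoids $f[C]$. Writing $N$ for the number of $\alpha$-planes that meet $f[C]$, we therefore have
\[|\mathcal A \setminus C| = |\mathcal A| - N = q^{\delta-\alpha}\,\kapa q\delta\alpha - N,\]
where the last equality is Lemma \ref{LemSizeAnn}. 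Thus the whole lemma reduces to the two-sided estimate $\kapa q\delta\alpha \leq N \leq c\,\kapa q\delta\alpha$.

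For the upper bound on $N$, I would use a union bound over the points of $f[C]$: by Lemma \ref{kalpha}.\ref{kaplhaOne} each point lies on exactly $\kapa q\delta\alpha$ $\alpha$-planes, and every plane counted by $N$ passes through at least one of the $c$ points of $f[C]$, so $N \leq c\,\kapa q\delta\alpha$. Substituting yields the left inequality. For the lower bound, since $C$ is non-empty I would fix a single point $z \in f[C]$; the $\alpha$-planes through $z$ all meet $f[C]$, and again there are exactly $\kapa q\delta\alpha$ of them by Lemma \ref{kalpha}.\ref{kaplhaOne}, so $N \geq \kapa q\delta\alpha$. Substituting yields the right inequality.

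Finally, when $c = 1$ the set $f[C]$ is a single point, so both estimates for $N$ collapse to the exact count $N = \kapa q\delta\alpha$, forcing both inequalities to become equalities.

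There is no serious obstacle here; the only point requiring care is that the union bound over-counts any $\alpha$-plane passing through two or more points of $C$, which is exactly why one obtains inequalities rather than a closed form for $|\mathcal A \setminus C|$ in general. Making the correspondence between hands and $\alpha$-planes explicit (so that $|\mathcal A|$ can be read off from Lemma \ref{LemSizeAnn} and ``avoiding $C$'' translates to ``avoiding $f[C]$'') is the step most worth stating carefully.
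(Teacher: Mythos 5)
Your proof is correct and takes essentially the same approach as the paper: both compute $|\mathcal{A}\setminus C|$ by subtracting the number of $\alpha$-planes meeting $f[C]$ from the total $q^{\delta-\alpha}\,\kapa q\delta\alpha$ of Lemma \ref{LemSizeAnn}, bounding that number above by a union bound over the $c$ points and below by the planes through a single point $z\in C$, with both bounds becoming exact when $c=1$. Your explicit introduction of $N$ and of the hand--plane correspondence via the bijection $f$ is simply a slightly more careful writeup of the paper's argument.
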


\proof
By Lemma \ref{LemSizeAnn}, $|\mathcal A|=q^{\delta-\alpha}\: \kapa q\delta\alpha $. Thus we may estimate the number of $\alpha$-planes that meet $C$ and subtract to obtain our bounds.

Suppose that $\mathcal A=\mathcal A[f]$. To bound $|\mathcal{A}\setminus C|$ from below, observe that there are $\kapa q\delta\alpha$ $\alpha$-planes passing through each point in $f[C]$ and there are $c$ such points, so that the number of hands in $\mathcal A$ meeting $C$, which is equal to the number of $\alpha$-planes touching $f[C]$, is at most $c\kapa q\delta\alpha$. It follows that 
\[\kapa q\delta\alpha (q^{\delta-\alpha}-c) \leq |\mathcal{A}\setminus C|.\]
Observe that when $c>1$ we are subtracting one $\alpha$-plane at least twice so the inequality is strict, but when $c=1$ then equality holds.

Now let us show the right-hand inequality.  Since $C\not=\varnothing$, we can pick $z\in C$ and observe that there are $\kapa q\delta\alpha$ $\alpha$-planes meeting $f(z)$, and thus {\em at least} $\kapa q\delta\alpha$ meeting $f[C]$. It follows that
\[|\mathcal{A}\setminus C|\leq \kapa q\delta\alpha q^{\delta-\alpha}-\kapa q\delta \alpha=\kapa q\delta\alpha (q^{\delta-\alpha}-1),\]
and if $C=\{z\}$ this computation is exact.
\endproof

\begin{lema}\label{LemAxC} If $\mathcal A$ is an announcement of the geometric strategy with parameters $\alpha,\delta,q$ and $C\in {\Omega\choose c}$ is non-empty, then
\begin{equation}\label{EqAxC}
\kapa q\delta \alpha-c\kkapa q\delta\alpha \leq|\mathcal{A}_x \setminus C|\leq \kapa q\delta\alpha-\kkapa q\delta \alpha.
\end{equation}
Equality holds when $c=1$.
\end{lema}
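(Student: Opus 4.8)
The plan is to mirror the proof of Lemma~\ref{LemAmenosC}, replacing the count of \emph{all} $\alpha$-planes by the count of $\alpha$-planes through the fixed point $x$. Write $\mathcal A=\mathcal A[f]$. Since $f$ is a bijection, an element $X\in\mathcal A$ contains $x$ exactly when $f[X]$ is an $\alpha$-plane through $f(x)$, so by Lemma~\ref{kalpha}.\ref{kaplhaOne} we have $|\mathcal A_x|=\kapa q\delta\alpha$. The quantity we actually want is $|\mathcal A_x\setminus C|=|\mathcal A_x|-N$, where $N$ denotes the number of $\alpha$-planes through $f(x)$ that also meet $f[C]$. Thus both bounds in \eqref{EqAxC} reduce to estimating $N$: from above, for the lower bound on $|\mathcal A_x\setminus C|$, and from below, for the upper bound.

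For the lower bound on $|\mathcal A_x\setminus C|$ I would bound $N$ from above by a union bound. Since $x\notin C$, for each $z\in C$ the point $f(z)$ is distinct from $f(x)$, so Lemma~\ref{kalpha}.\ref{kalphaTwo} gives exactly $\kkapa q\delta\alpha$ $\alpha$-planes through both $f(x)$ and $f(z)$. Every $\alpha$-plane counted by $N$ passes through $f(x)$ and through at least one point of $f[C]$, hence lies in at least one of these $c$ families; therefore $N\leq c\kkapa q\delta\alpha$, which gives $|\mathcal A_x\setminus C|\geq\kapa q\delta\alpha-c\kkapa q\delta\alpha$. For the upper bound I only need a lower bound on $N$: as $C$ is non-empty, fix any $z\in C$; the $\kkapa q\delta\alpha$ $\alpha$-planes through both $f(x)$ and $f(z)$ each meet $f[C]$, so $N\geq\kkapa q\delta\alpha$ and hence $|\mathcal A_x\setminus C|\leq\kapa q\delta\alpha-\kkapa q\delta\alpha$.

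Finally, when $c=1$ we have $C=\{z\}$, and the set of $\alpha$-planes through $f(x)$ meeting $f[C]$ is precisely the family of $\alpha$-planes through $f(x)$ and $f(z)$, so $N=\kkapa q\delta\alpha$ exactly and both inequalities collapse to the same equality. The only real subtlety is the overcounting in the union bound: for $c>1$ a single $\alpha$-plane through $f(x)$ may hit several points of $f[C]$ simultaneously, which is exactly why the left inequality is generally strict. Ensuring the hypothesis $x\notin C$ is used (so that Lemma~\ref{kalpha}.\ref{kalphaTwo} applies to the \emph{distinct} points $f(x)$ and $f(z)$) is the one step that must not be overlooked.
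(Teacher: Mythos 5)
Your proof is correct and follows essentially the same route as the paper's: a union bound over the points of $C$ (overcounting planes that hit several points of $f[C]$) for the lower bound, and discarding the planes through a single fixed point of $f[C]$ for the upper bound, with exactness at $c=1$. Your explicit decomposition $|\mathcal A_x\setminus C|=|\mathcal A_x|-N$ and your remark that $x\notin C$ is needed for Lemma \ref{kalpha}.\ref{kalphaTwo} to apply are just slightly more careful renderings of what the paper leaves implicit.
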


\proof
Once again suppose that $\mathcal A=\mathcal A[f]$. First let us bound $|\mathcal{A}_x \setminus C|$ from below. To give our estimate, we will take the number of $\alpha$-planes passing through $f(x)$ and subtract the number of $\alpha$-planes passing through $f(x)$ and $f(y)$ for each $y\in C$, {\em without} taking into account that many $\alpha$-planes will be subtracted twice. Evidently this bound will not be tight, but it will be sufficient to establish our main results.

Recall that $\kapa q\delta\alpha$ counts {\em all} of the $\alpha$-planes passing through a given point. But, there are $\kkapa q\delta \alpha$ $\alpha$-planes passing through $f(x)$ and $f(y)$ for each $y\in C$, and thus there are at most $c\kkapa q\delta \alpha $ planes meeting both $f(x)$ and $f[C]$, hence also hands in $\mathcal A$ meeting $x$ and $C$. It follows that
\[\kapa q\delta\alpha- c \kkapa q\delta \alpha  \leq |\mathcal{A}_x \setminus C|.\]

Now let us bound $|\mathcal{A}_x \setminus C|$ from above. This time we use the fact that there is at least one $y_0\in C$, and we can discard all of those $\alpha$-planes that touch $f(y_0)$ as well as $f(x)$, of which there are $\kkapa q\delta \alpha$. It follows that
\[|\mathcal{A}_x \setminus C|\leq \kapa q\delta \alpha-\kkapa q\delta \alpha,\]
and the result follows. 
Once again this bound is exact when $c=1$.
\endproof

\subsection{Bounding probabilities}

The counting lemmas we have given above may be used to bound the probabilities we are interested in. First, we give a more exact bound, and later we will give a simplified version.

\begin{lema}\label{LemCotaFuerte}
If $\mathcal A$ is an announcement of the geometric strategy with parameters $q,\alpha,\delta$ and $C\in{\Omega\choose c}$ 
is non-empty,
\begin{equation}
\label{EqProbInf}
\frac{q^{2\delta}-cq^{\delta+\alpha}-q^\delta+c^2(q^\alpha-1)+c}{q^{2\delta}-q^{\delta}-q^{\delta+\alpha}+q^\alpha}  \leq\frac{P(x\in A|C,\mathcal{A})}{P(x\in A|C)}
\end{equation}
and
\begin{equation}\label{EqProbSup}
\frac{P(x\in A|C,\mathcal{A})}{P(x\in A|C)} \leq\frac{q^{2\delta}-cq^\delta-q^{\delta+\alpha}+cq^\alpha}{q^{2\delta}-cq^{\delta+\alpha}-q^{\delta}+cq^\alpha}.
\end{equation}
\end{lema}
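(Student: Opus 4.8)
The plan is to combine the formula \eqref{EqCociente} for the probability ratio with the counting bounds of Lemmas \ref{LemAmenosC} and \ref{LemAxC}. Rewriting \eqref{EqCociente} and using that $a=q^\alpha$ and $a+b=q^\delta-c$, we have
\[\frac{P(x\in A|C,\mathcal{A})}{P(x\in A|C)}=\frac{|\mathcal{A}_x\setminus C|}{|\mathcal{A}\setminus C|}\cdot\frac{q^\delta-c}{q^\alpha}.\]
The trailing factor $\frac{q^\delta-c}{q^\alpha}$ is a positive constant independent of which bounds are applied, so the inequalities preserve their direction and the whole problem reduces to bounding the quotient $\frac{|\mathcal{A}_x\setminus C|}{|\mathcal{A}\setminus C|}$ from above and below.

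For the lower bound \eqref{EqProbInf}, I would pair the smallest admissible numerator with the largest admissible denominator: by Lemma \ref{LemAxC} the numerator is at least $\kapa q\delta\alpha-c\kkapa q\delta\alpha$, and by Lemma \ref{LemAmenosC} the denominator is at most $\kapa q\delta\alpha(q^{\delta-\alpha}-1)$, giving
\[\frac{P(x\in A|C,\mathcal{A})}{P(x\in A|C)}\geq\frac{\kapa q\delta\alpha-c\kkapa q\delta\alpha}{\kapa q\delta\alpha(q^{\delta-\alpha}-1)}\cdot\frac{q^\delta-c}{q^\alpha}.\]
For the upper bound \eqref{EqProbSup} I would symmetrically take the largest numerator $\kapa q\delta\alpha-\kkapa q\delta\alpha$ and the smallest denominator $\kapa q\delta\alpha(q^{\delta-\alpha}-c)$.

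In both cases the common factor $\kapa q\delta\alpha$ cancels, and substituting $\kkapa q\delta\alpha=\frac{q^\alpha-1}{q^\delta-1}\kapa q\delta\alpha$ from Lemma \ref{kalpha}.\ref{kalphaTwo} leaves an expression solely in $q,\delta,\alpha,c$. Clearing the denominators $q^\delta-1$ and $q^\alpha$ (using $q^\alpha(q^{\delta-\alpha}-1)=q^\delta-q^\alpha$ and $q^\alpha(q^{\delta-\alpha}-c)=q^\delta-cq^\alpha$) and expanding the two products of binomials yields exactly the stated rational functions; for instance the numerator of \eqref{EqProbInf} should collect to $q^{2\delta}-cq^{\delta+\alpha}-q^\delta+c^2(q^\alpha-1)+c$. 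The only real obstacle I anticipate is bookkeeping during this expansion: the polynomials carry many terms and it is easy to drop a sign or a cross term, so the care lies in the algebra and in having applied each of the two lemmas in the direction that makes the final inequality point the right way.
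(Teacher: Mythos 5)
Your proposal is correct and follows essentially the same route as the paper's proof: the same pairing of the lower bound of Lemma \ref{LemAxC} with the upper bound of Lemma \ref{LemAmenosC} (and vice versa for the other direction), followed by the substitution $\kkapa q\delta\alpha=\frac{q^\alpha-1}{q^\delta-1}\kapa q\delta\alpha$ and the same algebraic simplification using $a=q^\alpha$, $a+b=q^\delta-c$. The expansions you flag as the main risk do indeed collect to the stated rational functions, so nothing is missing.
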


\proof
For the lower bound, we put the lower bound of Lemma \ref{LemAxC} together with the upper bound of Lemma \ref{LemAmenosC} to obtain
\[
\frac{\kapa q\delta\alpha-\kkapa q\delta\alpha c}{\kapa q\delta\alpha (q^{\delta-\alpha}-1)}\leq
\frac{|\mathcal{A}_x \setminus C|}{|\mathcal{A}\setminus C|}
=P(x\in A|C,\mathcal{A}).
\]
Using Lemma \ref{kalpha}.\ref{kalphaTwo} and simplifying we obtain
\[
\frac{q^\delta-cq^\alpha+c-1}{q^{2\delta-\alpha}-q^{\delta-\alpha}-q^\delta+1}\leq P(x\in A|C,\mathcal{A}).
\]
Thus,
\[\frac{\frac{q^\delta-cq^\alpha+c-1}{q^{2\delta-\alpha}-q^{\delta-\alpha}-q^\delta+1}}{\frac{a}{a+b}}\leq\frac{P(x\in A|C,\mathcal{A})}{P(x\in A|C)};\]
using the equalities $a=q^\alpha$ and $a+b+c=q^\delta$ and simplifying once again we obtain
\[
\frac{q^{2\delta}-cq^{\delta+\alpha}-q^\delta+c^2(q^\alpha-1)+c}{q^{2\delta}-q^{\delta}-q^{\delta+\alpha}+q^\alpha}\leq\frac{P(x\in A|C,\mathcal{A})}{P(x\in A|C)}.
\]

Now we turn to bounding the quotient of probabilities from above. As before, we focus on the numerator, since the denominator is fixed, and use Lemma \ref{LemAxC} bound $|\mathcal{A}_x \setminus C|$ from above and Lemma \ref{LemAmenosC} to bound $|\mathcal{A}\setminus C|$ from below.

Thus we obtain the following upper bound:
\[\frac{|\mathcal{A}_x \setminus C|}{|\mathcal{A}\setminus C|} \leq \frac{\kapa q\delta\alpha-\kkapa q\delta\alpha}{\kapa q\delta\alpha (q^{\delta-\alpha}-c)}.\]
Once again we may use Lemma \ref{kalpha}.\ref{kalphaTwo} and some algebra to obtain
\[\frac{P(x\in A|C,\mathcal{A})}{P(x\in A|C)}\leq\frac{q^{2\delta}-cq^\delta-q^{\delta+\alpha}+cq^\alpha}{q^{2\delta}-cq^{\delta+\alpha}-q^{\delta}+cq^\alpha}.\qedhere\]
\endproof

\begin{figure}
\begin{center}
\begin{tabular}{|c|c|c|c|c|c|c|c|}
\hline 
$\bm a$ & $\bm b$ & $\bm c$ & $\bm q$ & $\bm \alpha$ & $\bm \delta$ &  {\bf Lower} & {\bf Upper} \\ 
\hline
\hline 
23 & 504 & 2 & 23 & 1 & 2 &   0.9547 & 1.0456\\ 
\hline 
 43 & 1,803 & 3 & 43 & 1 & 2 & 0.9524 & 1.0488\\ 
\hline
64 & 4,028 & 4 & $2^6$ & 1 & 2 &  0.9524 & 1.0492\\ 
\hline
529 & 11,636 & 2 & 23 & 2 & 3 &  0.9545 & 1.0475\\ 
\hline 
1,849 & 77,655 & 3 & 43 & 2 & 3 &  0.9523 & 1.0499\\ 
\hline 
4,096 & 258,044 & 4 &  $2^6$ & 2 & 3 &  0.9523 & 1.0499\\ 
\hline 
12,167 & 267,672 & 2 &  23 & 3 & 4 & 0.9545 & 1.0476\\ 
\hline 
79,507 & 3,339,291 & 3 & 43 & 3 & 4 &   0.9523 & 1.0499\\ 
\hline 
262,144 & 16,515,068 & 4  & $2^6$ & 3 & 4 & 0.9523 & 1.0499\\ 
\hline 
\end{tabular}
\end{center}
\caption{Some choices of parameters for the geometric strategy along with their respective lower and upper bounds given by Lemma \ref{LemCotaFuerte}. These were found by fixing $c\in \{2,3,4\}$, $\alpha\leq 3$ and $\delta=\alpha+1$ and finding the least $q$ for which  the strategy is $0.05$-strongly safe.}
\end{figure}

\begin{ejem}
Let $q=2^{14}$, $\alpha=1$ and $\delta=3$. This gives rise to the triple
\begin{align*}
a&=16\mathord ,384  \\
b&=4\mathord ,398\mathord ,046\mathord ,494\mathord ,716 \\
c&= 4.
\end{align*}
Although the number of cards is rather large, this triple is remarkable in that it may be considered {\em floating-point perfectly secure;} indeed, by Lemma \ref{LemCotaFuerte} the geometric strategy is $1.118\times 10^{-8}$-strongly safe for this choice of parameters.
\end{ejem}

In fact, when $q$ is large, some simpler bounds will suffice for our purposes.

\begin{corolario}\label{CorSimp}
If $\mathcal A$ is an announcement of the geometric strategy with parameters $\alpha,\delta$ and $C\in{\Omega\choose c}$ is non-empty, then
\begin{equation}\label{CotaSimpInf}
1-\frac{cq^{\alpha}}{q^{\delta}-1}\leq\frac{P(x\in A|C,\mathcal{A})}{P(x\in A|C)}
\end{equation}
and
\begin{equation}\label{CotaSimpSup}
\frac{P(x\in A|C,\mathcal{A})}{P(x\in A|C)}\leq 1+\frac{c(q^\delta+1)+q^{\delta-\alpha}}{q^{2\delta-\alpha}-cq^{\delta}-q^{\delta-\alpha}}.
\end{equation}
\end{corolario}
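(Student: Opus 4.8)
The plan is to obtain the two simplified inequalities by \emph{weakening} the estimates already assembled in the proof of Lemma~\ref{LemCotaFuerte}. Recall that there the quantity of interest was written as
\[
\frac{P(x\in A|C,\mathcal{A})}{P(x\in A|C)}=\frac{|\mathcal{A}_x\setminus C|}{|\mathcal{A}\setminus C|}\cdot\frac{a+b}{a},
\]
and that Lemmas~\ref{LemAxC} and \ref{LemAmenosC}, together with the identity $\kkapa q\delta\alpha=\frac{q^\alpha-1}{q^\delta-1}\kapa q\delta\alpha$ from Lemma~\ref{kalpha}.\ref{kalphaTwo}, control $|\mathcal{A}_x\setminus C|$ and $|\mathcal{A}\setminus C|$ up to the common factor $\kapa q\delta\alpha$. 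Throughout I would use $a=q^\alpha$ and $a+b=q^\delta-c$, and I would assume $c<q^{\delta-\alpha}$, so that every denominator appearing below is positive; this is the weak-security regime of Theorem~\ref{protgeom} in which the corollary is actually applied.

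For the lower bound I would combine the lower estimate $|\mathcal{A}_x\setminus C|\geq\kapa q\delta\alpha-c\kkapa q\delta\alpha$ with the crude upper estimate $|\mathcal{A}\setminus C|\leq\kapa q\delta\alpha\, q^{\delta-\alpha}$, which is immediate from $\mathcal{A}\setminus C\subseteq\mathcal{A}$ and Lemma~\ref{LemSizeAnn}. Cancelling $\kapa q\delta\alpha$ and multiplying by $\frac{a+b}{a}=\frac{q^\delta-c}{q^\alpha}$ bounds the ratio below by the product
\[
\left(1-\frac{c(q^\alpha-1)}{q^\delta-1}\right)\left(1-\frac{c}{q^\delta}\right).
\]
Expanding and comparing with $1-\frac{cq^\alpha}{q^\delta-1}$ term by term, the difference of the two equals $\frac{c}{q^\delta(q^\delta-1)}$ plus a manifestly nonnegative term, so the product dominates the claimed bound; and when $1-\frac{cq^\alpha}{q^\delta-1}$ is itself negative the inequality \eqref{CotaSimpInf} is trivial, since the left-hand side is a ratio of probabilities and hence nonnegative.

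For the upper bound I would instead use the upper estimate $|\mathcal{A}_x\setminus C|\leq\kapa q\delta\alpha-\kkapa q\delta\alpha$ and the lower estimate $|\mathcal{A}\setminus C|\geq\kapa q\delta\alpha(q^{\delta-\alpha}-c)$. After cancelling and applying Lemma~\ref{kalpha}.\ref{kalphaTwo} this gives
\[
\frac{P(x\in A|C,\mathcal{A})}{P(x\in A|C)}\leq\frac{q^\delta-q^\alpha}{q^\delta-1}\cdot\frac{q^\delta-c}{q^\delta-cq^\alpha}.
\]
Since $q^\alpha\geq2$, the first factor is at most $1$, so the ratio is at most $\frac{q^\delta-c}{q^\delta-cq^\alpha}=1+\frac{c(q^\alpha-1)}{q^\delta-cq^\alpha}$. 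It then remains to check that this last error term is dominated by $\frac{c(q^\delta+1)+q^{\delta-\alpha}}{q^{\delta-\alpha}(q^\delta-cq^\alpha-1)}$, the right-hand term of \eqref{CotaSimpSup} after factoring its denominator as $q^{2\delta-\alpha}-cq^\delta-q^{\delta-\alpha}$. Multiplying numerator and denominator of $\frac{c(q^\alpha-1)}{q^\delta-cq^\alpha}$ by $q^{\delta-\alpha}$ puts both fractions over a denominator of the form $q^{\delta-\alpha}(q^\delta-cq^\alpha-\varepsilon_0)$ with $\varepsilon_0\in\{0,1\}$; the target has $\varepsilon_0=1$, hence the strictly smaller positive denominator, while a direct subtraction shows its numerator exceeds $c(q^\delta-q^{\delta-\alpha})$ by $c+q^{\delta-\alpha}+cq^{\delta-\alpha}>0$. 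Larger numerator and smaller positive denominator give the desired domination.

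I expect the only genuine obstacle to be this final numerator/denominator comparison in the upper bound: one must track exactly which nonnegative terms are discarded and confirm that all denominators stay positive, which is where the hypothesis $c<q^{\delta-\alpha}$ is used (it yields $q^\delta-cq^\alpha-1\geq q^\alpha-1\geq1$ since $q^\alpha\geq2$). Everything else is routine simplification, and the slack introduced by the crude bound $\frac{q^\delta-q^\alpha}{q^\delta-1}\leq1$ is precisely what makes the simplified bounds strictly looser than those of Lemma~\ref{LemCotaFuerte}, leaving room for the estimate to go through.
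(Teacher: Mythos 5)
Your proof is correct, and its overall strategy---weakening the counting estimates behind Lemma~\ref{LemCotaFuerte}---is the paper's, but the execution is genuinely different. The paper works directly with the \emph{statement} of Lemma~\ref{LemCotaFuerte}: it discards sign-definite terms from \eqref{EqProbInf} and \eqref{EqProbSup} (e.g.\ removing $-q^{\delta+\alpha}+q^\alpha<0$ from the denominator of \eqref{EqProbInf}), then simplifies the resulting quotients (cancelling $q^\delta$ in one case, factoring out $q^\alpha$ and doing a polynomial division in the other), landing exactly on \eqref{CotaSimpInf} and \eqref{CotaSimpSup}. You instead bypass that statement and recombine its ingredients (Lemmas~\ref{LemAmenosC} and \ref{LemAxC} with Lemma~\ref{kalpha}.\ref{kalphaTwo}): for the lower bound you use the cruder estimate $|\mathcal A\setminus C|\le q^{\delta-\alpha}\,\kapa q\delta\alpha$ from Lemma~\ref{LemSizeAnn} in place of the paper's $(q^{\delta-\alpha}-1)\,\kapa q\delta\alpha$, and you close both bounds by explicit fraction comparisons rather than term-dropping. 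Your arithmetic checks out: the lower-bound surplus is $\frac{c+c^{2}(q^\alpha-1)}{q^{\delta}(q^{\delta}-1)}\ge 0$, and in the upper bound the target fraction indeed has numerator larger by $c+q^{\delta-\alpha}+cq^{\delta-\alpha}>0$ over a smaller positive denominator, with positivity secured by $c\le q^{\delta-\alpha}-1$ giving $q^\delta-cq^\alpha-1\ge q^\alpha-1\ge 1$. What your route buys is rigor on points the paper leaves implicit: the standing hypothesis $c<q^{\delta-\alpha}$ (which the paper's own manipulations also require, since dropping positive terms from a denominator only preserves an inequality while that denominator stays positive) and the degenerate case where the lower bound is negative. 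What the paper's route buys is economy: its exact bounds factor cleanly---the upper bound of \eqref{EqProbSup} is $\frac{(q^\delta-c)(q^\delta-q^\alpha)}{(q^\delta-1)(q^\delta-cq^\alpha)}$---so the corollary follows in two lines of simplification with no cross-multiplication or case analysis.
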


\proof
Observe that, since $\delta\geq 2$, it follows that $-q^{\delta+\alpha}+q^\alpha<0$; thus we can remove this term from the denominator in \eqref{EqProbInf}, as well as some positive terms from the numerator to obtain
\[\frac{q^{2\delta}-cq^{\delta+\alpha}-q^\delta}{q^{2\delta}-q^{\delta}}\leq\frac{P(x\in A|C,\mathcal{A})}{P(x\in A|C)},\]
which gives us our lower bound \eqref{CotaSimpInf} by simplifying.

We may also obtain a simpler upper bound by removing negative terms from the numerator and positive terms from the denominator, giving us
$$\frac{P(x\in A|C,\mathcal{A})}{P(x\in A|C)}\leq\frac{q^{2\delta}+cq^\alpha}{q^{2\delta}-cq^{\delta+\alpha}-q^{\delta}},$$
which factoring $q^\alpha$ and performing polynomial division becomes our simplified upper bound \eqref{CotaSimpSup}.
\endproof

\subsection{Convergence}

Our simplified bounds from Corollary \ref{CorSimp} will be enough to yield many tuples for which the geometric strategy is $\varepsilon$-strongly safe for arbitrarily small $\varepsilon$. It is based on the following.

\begin{teorema}\label{TheoMain}
Let $\varepsilon>0$, $1\leq \alpha<\delta$ and $\bar c\colon \mathbb N\to\mathbb N$ be such that $\bar c(q)=o(q^{\delta-\alpha})$. Then, if $q$ is a large enough prime power, the geometric strategy with parameters $q,\alpha,\delta$ is $\varepsilon$-strongly safe for any $c<\bar c(q)$.
\end{teorema}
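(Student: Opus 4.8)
The plan is to obtain $\varepsilon$-strong safety directly from the simplified bounds of Corollary \ref{CorSimp}, by showing that both error terms appearing there tend to $0$ as $q\to\infty$ under the hypothesis $\bar c(q)=o(q^{\delta-\alpha})$. Recall that the strategy is $\varepsilon$-strongly safe for $(a,b,c)$ exactly when
\[1-\varepsilon<\frac{P(x\in A|C,\mathcal A)}{P(x\in A|C)}<1+\varepsilon\]
for every non-empty $C\in\binom{\Omega}{c}$, every $x\in\Omega\setminus C$, and every possible announcement $\mathcal A$ (the case $c=0$ being vacuous). By Corollary \ref{CorSimp} it thus suffices to find a threshold $Q$, depending only on $\varepsilon,\alpha,\delta$ and the function $\bar c$, such that for every prime power $q\geq Q$ and every $c<\bar c(q)$ both
\[E_1:=\frac{cq^{\alpha}}{q^{\delta}-1}<\varepsilon\qquad\text{and}\qquad E_2:=\frac{c(q^\delta+1)+q^{\delta-\alpha}}{q^{2\delta-\alpha}-cq^{\delta}-q^{\delta-\alpha}}<\varepsilon;\]
the lower bound \eqref{CotaSimpInf} then forces the ratio above $1-E_1>1-\varepsilon$, and the upper bound \eqref{CotaSimpSup} forces it below $1+E_2<1+\varepsilon$.

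Next I would reduce to the worst case $c=\bar c(q)$. Both $E_1$ and $E_2$ are increasing in $c$ wherever their denominators are positive: $c$ occurs only in the numerator of $E_1$, while raising $c$ simultaneously increases the numerator and decreases the (positive) denominator of $E_2$. Hence $\sup_{c<\bar c(q)}E_i\le E_i\big|_{c=\bar c(q)}$, and it is enough to estimate each $E_i$ with $c$ replaced by $\bar c(q)=o(q^{\delta-\alpha})$. For $E_1$ this is immediate: $\bar c(q)\,q^{\alpha}=o(q^{\delta})$ while $q^{\delta}-1\ge\tfrac12 q^{\delta}$ for all large $q$, so $E_1=o(1)$ and in particular $E_1<\varepsilon$ once $q$ is large.

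For $E_2$ the only real work is keeping the denominator positive and of the expected order $q^{2\delta-\alpha}$. Since $\delta\ge\alpha+1$ we have $2\delta-\alpha>\delta>\delta-\alpha$, and because $\bar c(q)q^{\delta}=o(q^{2\delta-\alpha})$ together with $q^{\delta-\alpha}=o(q^{2\delta-\alpha})$, for large $q$ the two subtracted terms are together at most $\tfrac12 q^{2\delta-\alpha}$, so the denominator is at least $\tfrac12 q^{2\delta-\alpha}>0$. The numerator is $o(q^{2\delta-\alpha})$ for the same reasons, so $E_2=o(1)$ and $E_2<\varepsilon$ for large $q$. Choosing $Q$ large enough that all of these estimates hold at once completes the argument. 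I expect this last denominator estimate to be the main (and essentially only) obstacle: it is precisely where the hypothesis $\bar c(q)=o(q^{\delta-\alpha})$, rather than a weaker growth bound such as $o(q^{\delta})$, is used in an essential way, since it is exactly what guarantees $cq^{\delta}=o(q^{2\delta-\alpha})$ and hence that the leading term $q^{2\delta-\alpha}$ dominates.
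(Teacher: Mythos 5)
Your proposal is correct and follows essentially the same route as the paper's own proof: both deduce the theorem from Corollary \ref{CorSimp} by observing that the two error terms tend to $0$ as $q\to\infty$ under the hypothesis $\bar c(q)=o(q^{\delta-\alpha})$. The paper's version is just terser, leaving implicit the monotonicity-in-$c$ reduction and the denominator-positivity check that you spell out explicitly.
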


\proof
If $\bar c(q)=o(q^{\delta-\alpha})$ then $1-\frac{\bar c(q)q^{\alpha}}{q^{\delta}-1}$ and $1+\frac{\bar c(q)(q^\delta+1)+q^{\delta-\alpha}}{q^{2\delta-\alpha}-\bar c(q) q^{\delta}-q^{\delta-\alpha}}$ both converge to $1$ as $q\to\infty$. It follows from Corollary \ref{CorSimp} that if $q$ is large and $c<\bar c(q)$,
\[\left |\frac{P(x\in A|C,\mathcal{A})}{P(x\in A|C)}-1 \right |< \varepsilon,\]
which means that the geometric strategy is $\varepsilon$-strongly safe.
\endproof

However, convergence may be quicker or slower depending on how we choose $\bar c$. For example, if we fix $\xi>0$ and take $\bar c(q)=\lfloor q^{\delta-\alpha-\xi} \rfloor$, then this quotient will tend to one, but if $\xi$ is very small we may need a very large number of cards for it to be less than some given $\varepsilon$. More generally, we have the following:

\begin{teorema}
Fix $1\leq \alpha<\delta$, $\xi\in (0,\delta-\alpha)$ and $\bar c\colon\mathbb N\to\mathbb N$ with $\bar c(q)\leq q^{\delta-\alpha-\xi}$. Then, for $q$ a prime power, any announcement $\mathcal A$ of the geometric strategy any card $x$ and any set of $C$ cards with at most $\bar c(q)$ elements,
\[\frac{P(x\in A|C,\mathcal{A})}{P(x\in A|C)}=1+O(\nicefrac 1{q^\xi}).\]
\end{teorema}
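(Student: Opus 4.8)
The plan is to deduce the claim directly from the simplified two-sided bounds of Corollary~\ref{CorSimp}, observing that the hypothesis $\bar c(q)\le q^{\delta-\alpha-\xi}$ is exactly the $o(q^{\delta-\alpha})$ decay exploited in Theorem~\ref{TheoMain}, but now equipped with an explicit rate. Since the ratio $P(x\in A|C,\mathcal A)/P(x\in A|C)$ is sandwiched between the lower bound \eqref{CotaSimpInf} and the upper bound \eqref{CotaSimpSup}, it suffices to show that each of these one-sided bounds differs from $1$ by a quantity that is $O(1/q^\xi)$, uniformly over all admissible $C$. Both deviations from $1$ are increasing in $c$, so I would substitute the bound $c\le q^{\delta-\alpha-\xi}$ directly into each expression.

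First I would treat the lower deviation. From \eqref{CotaSimpInf} the ratio is at least $1-\frac{cq^\alpha}{q^\delta-1}$, and substituting $c\le q^{\delta-\alpha-\xi}$ bounds the deviation by $\frac{q^{\delta-\xi}}{q^\delta-1}$, which is asymptotic to $q^{-\xi}$ as $q\to\infty$. Hence the lower bound is $1-O(1/q^\xi)$, with no subtleties.

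Next I would treat the upper deviation, which is the more delicate of the two. From \eqref{CotaSimpSup} the excess over $1$ is $\frac{c(q^\delta+1)+q^{\delta-\alpha}}{q^{2\delta-\alpha}-cq^\delta-q^{\delta-\alpha}}$. Substituting $c\le q^{\delta-\alpha-\xi}$, the numerator is at most $q^{2\delta-\alpha-\xi}+q^{\delta-\alpha-\xi}+q^{\delta-\alpha}$, whose dominant term is $q^{2\delta-\alpha-\xi}$. The key step is to control the denominator from below: since $\xi>0$ we have $cq^\delta\le q^{2\delta-\alpha-\xi}=o(q^{2\delta-\alpha})$, and since $\delta\ge 1$ we have $q^{\delta-\alpha}=o(q^{2\delta-\alpha})$, so the denominator equals $q^{2\delta-\alpha}(1-o(1))$ and in particular stays positive and of order $q^{2\delta-\alpha}$ for large $q$. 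Dividing, the excess is at most $q^{-\xi}(1+o(1))=O(1/q^\xi)$.

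Combining the two one-sided estimates, the ratio lies in $[\,1-O(1/q^\xi),\,1+O(1/q^\xi)\,]$, which is precisely the asserted $1+O(1/q^\xi)$. The only real obstacle is the bookkeeping for the upper bound's denominator: one must verify that the subtracted terms $cq^\delta$ and $q^{\delta-\alpha}$ are genuinely lower order than the leading term $q^{2\delta-\alpha}$. This is exactly where the hypotheses $\xi>0$ and $\alpha<\delta$ enter, and it is what guarantees the implied $O$-constant is uniform over all $c\le\bar c(q)$.
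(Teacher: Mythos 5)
Your proposal is correct and follows essentially the same route as the paper: both arguments substitute the hypothesis $c\le q^{\delta-\alpha-\xi}$ into the two simplified bounds of Corollary~\ref{CorSimp} and check that each deviation from $1$ is $O(\nicefrac{1}{q^\xi})$, with your treatment of the upper bound's denominator (showing $cq^\delta$ and $q^{\delta-\alpha}$ are lower order than $q^{2\delta-\alpha}$) just making explicit what the paper leaves implicit in its final inequality.
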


\proof
If we take $c=\bar c(q)\leq q^{\delta-\alpha-\xi}$, we have that
\[\frac{cq^{\alpha}}{q^{\delta}-1}\leq \frac{q^{\delta-\xi}}{q^{\delta}-1}=O(\nicefrac 1{q^\xi}),\]
whereas
\[\frac{c(q^\delta+1)+q^{\delta-\alpha}}{q^{2\delta-\alpha}-cq^{\delta}-q^{\delta-\alpha}}\leq \frac{q^{2\delta-\alpha-\xi}+q^{\delta-\alpha-\xi}+q^{\delta-\alpha}}{q^{2\delta-\alpha}-q^{2\delta-\alpha-\xi}-q^{\delta-\alpha}}=O(\nicefrac 1{q^\xi}).\]
The theorem then follows from Corollary \ref{CorSimp}.
\endproof

Here we see a trade-off between keeping Cath's hand relatively large and obtaining a good rate of convergence for our bounds. Observe, however, that the larger $c$ is, the less tight our bounds are, so despite our bounds converging rather slowly there may be smaller examples with a large degree of security.

\section{Choosing good parameters}

In this section we will focus on strategies for finding specific choices of parameters for which the geometric protocol is $\varepsilon$-safe. In particular, we will fix $\varepsilon=0.05$, and use our bounds to find several explicit tuples for which the geometric strategy is $\varepsilon$-safe. It is interesting to compare this to \cite{swanson2014additional}, where many choices of parameters for which the protocol is perfectly safe are exhibited. All of the tuples presented there have $c\leq 3$, and the authors discuss the difficulty of finding perfectly safe strategies for larger $c$. As we shall see this becomes substantially simpler if we weaken our requirements to (e.g.) $0.05$-strong safety. Thus we may argue that passing to a weaker notion of security allows us to make the Russian cards problem substantially easier to solve without compromising security in a practically meaningful way.

\subsection{Cath has one card}
The notion of perfect security for the Russian cards problem was introduced in \cite{swanson2014combinatorial}, where several examples with $c=1$ are provided. Here we will show that, in this setting, the bounds we have found may also be used to establish perfect security in the case of the geometric strategy.
\begin{corolario}\label{perfect}
Given $1\leq \alpha<\delta$ and a prime power $q$, the geometric strategy is perfectly safe for $(a,b,1)$ with $a=q^\alpha$ and $b=q^\delta-a-1$.
\end{corolario}

\proof
We will use our original bounds from Lemma \ref{LemCotaFuerte}. Setting $c=1$ in \eqref{EqProbInf} we see that
$$\frac{P(x\in A|C,\mathcal{A})}{P(x\in A|C)}\leq\frac{q^{2\delta}-q^\delta-q^{\delta+\alpha}+q^\alpha}{q^{2\delta}-q^{\delta+\alpha}-q^{\delta}+q^\alpha}=1.$$
Similarly, we substitute $c=1$ in \eqref{EqProbSup} to obtain
\[1=\frac{q^{2\delta}-q^{\delta+\alpha}-q^\delta+q^\alpha-1+1}{q^{2\delta}-q^{\delta}-q^{\delta+\alpha}+q^\alpha}\leq\frac{P(x\in A|C,\mathcal{A})}{P(x\in A|C)}.\]
Thus we have seen that $\frac{P(x\in A|C,\mathcal{A})}{P(x\in A|C)}= 1,$ that is, ${P(x\in A|C,\mathcal{A})}={P(x\in A|C)}.$
\endproof

Hence our bounds give an alternative proof that the geometric strategy is perfectly secure when $c=1$. Now let us turn our attention to larger $c$.

\subsection{Making Cath's hand large}

Suppose that we wish to obtain good tuples for which $c$ is as large as possible relative to Alice's hand. Cath's hand is bounded by two expressions on $\alpha$, one increasing on $\alpha$ ($c<q^\alpha-q^{\alpha-1}$) and one decreasing ($c=o(q^{\delta-\alpha})$). Thus, the maximum value that $c$ may take is when the two bounds coincide, which occurs approximately when $\delta=2\alpha$ or $\delta=2\alpha+1$.

The latter case is interesting, since we already have that $q^\alpha-q^{\alpha-1}<q^{\delta-\alpha-1}$. Thus in cases that Alice holds relatively few cards, less than the square root of the deck, our informativity bound will already give us $\varepsilon$-strong safety for large values of $q$. To be precise, we have the following:

\begin{corolario}\label{CorAlpha}
Let $\varepsilon,\beta>0$ and $\ParCath\in(0,1)$. Then, there are infinitely many values of $a$ such that for any $c<\ParCath a$ there is  $b<a^{2+\beta}$ so that the geometric strategy is informative and $\varepsilon$-strongly safe for $(a,b,c)$.
\end{corolario}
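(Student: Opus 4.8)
The plan is to instantiate the geometric strategy with $\delta=2\alpha+1$, choosing $\alpha$ large enough to control $b$, and then to let $q$ range over sufficiently large prime powers, producing infinitely many values of $a=q^\alpha$. Concretely, I would fix an integer $\alpha\geq 1$ with $\alpha\beta\geq 1$ (for instance $\alpha=\lceil 1/\beta\rceil$), set $\delta=2\alpha+1$ (so that $\alpha<\delta$), and for each prime power $q$ put $a=q^\alpha$ and $b=q^\delta-a-c$. The motivation, as noted in the discussion preceding the statement, is that with this choice the informativity bound $c<q^\alpha-q^{\alpha-1}$ is tighter than the regime $c=o(q^{\delta-\alpha})=o(q^{\alpha+1})$ needed for strong safety, so that informativity becomes the binding constraint and strong safety comes along for free.

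First I would verify informativity. Since $\ParCath<1$, as soon as $q>1/(1-\ParCath)$ we have $\ParCath q^\alpha<(1-\nicefrac{1}{q})q^\alpha=q^\alpha-q^{\alpha-1}$, so every $c<\ParCath a$ satisfies the hypothesis $c<q^\alpha-q^{\alpha-1}$ of Lemma \ref{CotaInfo}; hence the strategy is informative for $(a,b,c)$ for all such $c$ once $q$ is large. Next I would establish $\varepsilon$-strong safety uniformly in $c$. Define $\bar c(q)=\lfloor \ParCath q^\alpha\rfloor+1$, so that $c<\ParCath a$ implies $c<\bar c(q)$. Because $\delta-\alpha=\alpha+1$, we have $\bar c(q)\leq \ParCath q^\alpha+1=o(q^{\alpha+1})=o(q^{\delta-\alpha})$, so Theorem \ref{TheoMain} applies with this $\bar c$: for all sufficiently large prime powers $q$ the geometric strategy with parameters $q,\alpha,\delta$ is $\varepsilon$-strongly safe for every $c<\bar c(q)$, and in particular for every $c<\ParCath a$.

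Finally I would check the bound on $b$ and conclude. Since $a,c\geq 1$ we have $b=q^\delta-a-c<q^\delta=q^{2\alpha+1}$, and because $\alpha\beta\geq 1$ we get $2\alpha+1\leq 2\alpha+\alpha\beta=\alpha(2+\beta)$, whence $b<q^{\alpha(2+\beta)}=(q^\alpha)^{2+\beta}=a^{2+\beta}$. Taking $q$ over the infinitely many prime powers exceeding the (finitely many) thresholds coming from the informativity condition $q>1/(1-\ParCath)$ and from the "large enough $q$" clause of Theorem \ref{TheoMain} produces infinitely many distinct $a=q^\alpha$, each meeting all the requirements.

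The only genuine decision in the argument is the joint choice of $\alpha$ and $\delta$, and this is where I expect the real content to lie: setting $\delta=2\alpha+1$ is what makes the informativity constraint subsume the strong-safety regime $c=o(q^{\delta-\alpha})$, while $\alpha\geq 1/\beta$ is exactly what forces $q^\delta\leq a^{2+\beta}$. Once these are fixed, every remaining step reduces to invoking Lemma \ref{CotaInfo} and Theorem \ref{TheoMain} together with elementary comparisons of powers of $q$, so no new estimates beyond those already established should be required.
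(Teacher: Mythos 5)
Your proof is correct and takes essentially the same route as the paper's: choose $\delta=2\alpha+1$ with $\alpha\geq\nicefrac1\beta$, invoke Theorem \ref{TheoMain} for $\varepsilon$-strong safety of all $c<\ParCath a$, and verify $b<q^\delta\leq a^{2+\beta}$. The only difference is cosmetic—the paper takes $\bar c(q)=q^\alpha-q^{\alpha-1}-1$ (just under the informativity bound) and checks it exceeds $\ParCath a$ for large $q$, whereas you take $\bar c(q)=\lfloor\ParCath q^\alpha\rfloor+1$ and cite Lemma \ref{CotaInfo} explicitly for informativity; your handling of that point is in fact slightly more precise, since Theorem \ref{TheoMain} as stated asserts only safety, not informativity.
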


\proof
Let $\varepsilon,\beta>0$ and $\ParCath\in(0,1)$. Pick $\alpha$ large enough so that $\nicefrac{1}\alpha\leq \beta$ and $Q$ large enough so that $\nicefrac 1Q<\nicefrac {(1-\ParCath)} 2$. Set $\delta=2\alpha+1$ and $\bar c(q)=q^\alpha+q^{\alpha-1}-1$. Then, since $\delta-\alpha=\alpha+1$ we have that $\bar c(q)=o(q^{\delta-\alpha})$, from which it follows from Theorem \ref{TheoMain} that for large $q$, the geometric strategy is informative and $\varepsilon$-strongly safe for $a=q^\alpha$, $c\leq \bar c(q)$ and $b=q^{\delta}-a-c$. In particular we may also take $q>Q$, so that
\[\bar c(q)=q^\alpha-q^{\alpha-1}-1=(1-\nicefrac 1q-\nicefrac 1{q^\alpha})a> \ParCath a.\]
Meanwhile, $b<q^\delta=(q^{\alpha})^{\frac{2\alpha+1}\alpha}<a^{2+\beta},$ so all desired conditions are met.
\endproof

Compare the above result to Corollary \ref{CorWeakBigC}. As before we can have $c=O(a)$, but this time instead of having $b<ac$ we must take $b=O(ac^{1+\beta})$. Thus the price of obtaining $\varepsilon$-strong security is to make Bob's hand a bit larger than we would need for weak security. In Figure \ref{FigAlpha}, we fix values of $\ParCath$ and $\beta$ and use the strategy of Corollary \ref{CorAlpha} and its proof to find tuples for which the protocol is $0.5$-strongly secure.

\begin{figure}
\begin{center}
\begin{tabular}{|c|c|c|c|c|c|c|c|}
\hline 
$\bm a$ & $\bm b$ & $\bm c$ & $\bm q$ & $\bm \alpha$ & $\bm \ParCath$ &{\bf Lower} & {\bf Upper} \\ 
\hline
\hline
 8 & 500 & 4 & $2^3$& 1 & 0.5 &  0.9527 & 1.0438 \\ 
\hline
9&715&5&$3^3$&1&0.5&0.9503&1.0468
\\
\hline
11 & 1,316 & 4  & 11 & 1 & 0.3 & 0.9750 & 1.0233\\ 
\hline 
13 & 2,180 & 4  & 13 & 1 & 0.3 &  0.9821 & 1.0167 \\ 
\hline
17&4,881&15&17&1&0.8 &0.9515&1.0480\\
\hline
19 & 6,836 & 4 & 19 & 1 & 0.2 &  0.9916& 1.0079 \\ 
\hline
49&16,743&15&7&2&0.3&0.9591&1.0418
\\
\hline
125&77,974&26&5&3&0.2&0.9599&1.0414\\
\hline
\end{tabular}
\end{center}
\caption{Tuples for which the geometric strategy is $0.5$-strongly secure with $q$ a prime power, $a=q^\alpha$, $c \geq \ParCath \alpha$, and $b=q^{2\alpha+1}-a-c$.}\label{FigAlpha} 
\end{figure}

\subsection{Making Bob's hand small}

We may also give a $\varepsilon$-strongly secure analogue of Corollary \ref{CorWeakSmallB}. Suppose that we instead want to have a small number of cards in Bob's hand relative to Alice's. In our above construction Bob's hand grew relatively quickly, so we want a different strategy for selecting parameters. The trade-off will be that Cath's hand may be substantially smaller than Alice's. In general if we want Bob to have less than $a^{1+\beta}$ cards, then Cath must have less than $a^\ParCath$ for some $\ParCath<\beta$.

\begin{corolario}\label{CorRho}
Let $\varepsilon>0$, $\ParCath\in(0,1)$ and $\beta>\ParCath$ be such that $\beta\in\mathbb Q$. Then, there are infinitely many values of $a$ such that for any $c<a^{\ParCath}$ there is $b<a^{1+\beta}$ such that the geometric strategy is informative and $\varepsilon$-strongly safe for $(a,b,c)$.
\end{corolario}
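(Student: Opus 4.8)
The plan is to mirror the proof of Corollary \ref{CorWeakSmallB}, replacing the appeal to weak security with our convergence result, Theorem \ref{TheoMain}. The crucial new feature is that $\varepsilon$-strong safety (rather than mere weak safety) forces $c$ to be of genuinely smaller order than $q^{\delta-\alpha}$, since Theorem \ref{TheoMain} demands a bound $\bar c(q)=o(q^{\delta-\alpha})$; reconciling this with the required upper bound on Bob's hand is exactly what the hypothesis $\beta>\ParCath$ will buy us.

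First I would exploit the rationality of $\beta$: since $\beta\in\mathbb Q$ and $\beta>\ParCath>0$, the number $1+\beta$ is a rational greater than $1$, so I may pick positive integers $1\le\alpha<\delta$ with $\nicefrac\delta\alpha=1+\beta$, whence $\delta-\alpha=\beta\alpha$. I then set $\bar c(q)=\lceil q^{\ParCath\alpha}\rceil$. Because $\ParCath<\beta$ we have $\ParCath\alpha<\beta\alpha=\delta-\alpha$, so $\bar c(q)=o(q^{\delta-\alpha})$, and Theorem \ref{TheoMain} applies: for all sufficiently large prime powers $q$, the geometric strategy with parameters $q,\alpha,\delta$ is $\varepsilon$-strongly safe for every $c<\bar c(q)$. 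Since $a^{\ParCath}=q^{\ParCath\alpha}\le\bar c(q)$, any $c<a^{\ParCath}$ satisfies $c<\bar c(q)$, so the safety conclusion covers exactly the claimed range of $c$.

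Next I would discharge the remaining two requirements. For informativity I invoke Lemma \ref{CotaInfo}, which needs $c<q^\alpha-q^{\alpha-1}$; since $\ParCath<1$ we have $q^{\ParCath\alpha}<q^\alpha-q^{\alpha-1}$ for large $q$, so any $c<a^{\ParCath}=q^{\ParCath\alpha}$ already lies below the informativity threshold. For the bound on Bob's hand I set $a=q^\alpha$ and, given any $c<a^{\ParCath}$, put $b=q^\delta-a-c$; then $b<q^\delta=q^{(1+\beta)\alpha}=a^{1+\beta}$, as required, the inequality being strict because we subtract the positive quantity $a+c$. Letting $q$ range over the infinitely many large prime powers yields infinitely many admissible values $a=q^\alpha$, completing the argument.

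The only point requiring care — and the analogue of the balancing act in Corollary \ref{CorAlpha} — is arranging the two competing constraints simultaneously: $\varepsilon$-strong safety demands $\nicefrac\delta\alpha>1+\ParCath$ (so that $q^{\ParCath\alpha}=o(q^{\delta-\alpha})$), while the target $b<a^{1+\beta}$ demands $\nicefrac\delta\alpha\le1+\beta$. The hypothesis $\beta>\ParCath$ is precisely what makes the window $(1+\ParCath,\,1+\beta]$ nonempty, and the rationality of $\beta$ lets me realize its right endpoint as an integer ratio $\nicefrac\delta\alpha$; everything else is the same routine bookkeeping as in the weakly secure analogue.
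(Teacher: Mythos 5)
Your proposal is correct and takes essentially the same approach as the paper: the paper's own proof is precisely the instruction to repeat the argument of Corollary~\ref{CorWeakSmallB} with $c<q^{\ParCath\alpha}$ and with Theorem~\ref{TheoMain} in place of Theorem~\ref{protgeom}, which is exactly what you carried out. The only difference is that you make explicit two details the paper leaves implicit, namely the separate informativity check via Lemma~\ref{CotaInfo} and the role of $\beta>\ParCath$ in making $q^{\ParCath\alpha}=o(q^{\delta-\alpha})$ --- both correctly handled.
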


\proof
Similar to the proof of Corollary \ref{CorWeakSmallB}, but taking $c<q^{\ParCath\alpha}$ and using Theorem \ref{TheoMain}.
\endproof

Once again, we may obtain $\varepsilon$-strong security for infinitely many tuples $(a,b,c)$ where $c<a$ and $b=O(ac^{1+\beta})$. So the takeaway in either case is that, making Bob's hand only a bit bigger compared to the rest of the deck and choosing an appropriately large deck, we may obtain $\varepsilon$-strong security instead of merely weak security. In Figure \ref{FigRho} we use this idea to find additional tuples for which the geometric protocol is $0.5$-safe.

\begin{figure}
\begin{center}
\begin{tabular}{|c|c|c|c|c|c|c|c|c|}
\hline 
$\bm a$ & $\bm b$ & $\bm c$ & $\bm q$ & $\bm \alpha$ & $\bm \ParCath$ & $\bm \delta$ & {\bf Lower} & {\bf Upper} \\ 
\hline
\hline
16&1,004&4&4&2&0.5&5&0.9525&1.0470
\\
\hline
25&3,095&5&5&2&0.5&5&0.9677&1.0321
\\
\hline
27&6,524&10&3&3&0.7&8&0.9628&1.0373
\\
\hline
107&11,336&6&107&1&0.4&2&0.9528&1.0490
\\
\hline
529 & 11,636 & 2 & 23 & 2 & 0.15 & 3 & 0.9545 & 1.0475\\
\hline
529 & 279,289 & 23 &23&  2	& 0.5& 4 & 0.9583 & 1.0433
\\
\hline
11,449&1,213,588&6&107& 2 & 0.2& 3&0.9528&1.0495\\
\hline
6,561&4,776,375&33&9&4& 0.4& 7&0.9560&1.0459\\
\hline
\end{tabular}
\end{center}
\caption{Tuples for which the geometric strategy is $0.05$-strongly secure with $q$ a prime power, $\delta\geq \alpha>0$, $\ParCath <\nicefrac \delta\alpha-1$, $a=q^\alpha$, $c=\lfloor a^\ParCath \rfloor$, and $b=q^{\delta}-a-c$.}\label{FigRho} 
\end{figure}

\section{Concluding remarks}

While the Russian cards provides a setting in which we may attain unconditionally safe communication, many known solutions to the Russian cards problem are only weakly safe, meaning that they may provide the eavesdropper with probabilistic information. Although perfectly secure solutions are known, it is somewhat difficult to find tuples for which this level of security may be attained. In this paper we have shown, however, that the amount of probabilistic information obtained by the eavesdropper may be controlled, to the extent that in some cases we can obtain a degree of safety indistinguishable from perfect safety for all practical purposes. Weakening the notion of perfect security has led to an infinite number of new tuples for which we may still obtain a high degree of security, and indeed the bounds we have given may be used to analyze the level of security in any instance of the geometric strategy. Moreover, our techniques had the added bonus of replicating some results of \cite{swanson2014combinatorial}.

There are further directions that may be explored. Although our bounds are tight when $c=1$, the larger $c$ is, the less exact they are, which may keep us from identifying many tuples for which we have a high degree of security. These bounds may be improved with a deeper (and, possibly, messier) combinatorial analysis which takes into account collinear points using the inclusion-exclusion principle. In fact, tailor-made bounds can be used for specific values of $c$ that may be of interest.

Finally, this analysis could be generalized further to include other combinatorial constructions, for example considering a wider class of designs. Such efforts could very well lead to more flexible methods of finding tuples for which there are strategies with very high levels of security.


\end{document}